\def\figurecaption#1#2{\noindent\hangindent 40pt
                       \hbox to 36pt {\small\sl #1 \hfil}
                       \ignorespaces {\small #2}}
\long\def\@makecaption#1#2{
  \vskip 10pt 
  \settowidth{\@tempdima}{#2}
  \ifdim\@tempdima>0pt
       \setbox\@tempboxa\hbox{#1: #2}
     \else
       \setbox\@tempboxa\hbox{#1 #2}
   \fi
   \ifdim \wd\@tempboxa >\hsize               
       \begin{list}{#1:}{
       \settowidth{\labelwidth}{#1:}
       \setlength{\leftmargin}{\labelwidth}
       \addtolength{\leftmargin}{\labelsep}
        }\item #2 \end{list}\par   
     \else                                    
       \hbox to\hsize{\hfil\box\@tempboxa\hfil}  
   \fi}
\newcommand{\changes}[1]{{\color{black} {#1}}}
\newcommand{\reva}[1]{{\color{black}  #1}}
\newcommand{\revb}[1]{{\color{black}   #1}}
\newsavebox{\bullred}
\newsavebox{\bullolive}
\newsavebox{\bullblue}
\newsavebox{\bullteal}
\newsavebox{\bullbrown}
\newsavebox{\bullorange}
\sbox\bullred{\tikz{\draw[red,fill=red] circle (.3ex);}}
\sbox\bullolive{\tikz{\draw[olive,fill=olive] circle (.3ex);}}
\sbox\bullblue{\tikz{\draw[blue,fill=blue] circle (.3ex);}}
\sbox\bullteal{\tikz{\draw[teal,fill=teal] circle (.3ex);}}
\sbox\bullbrown{\tikz{\draw[brown,fill=brown] circle (.3ex);}}
\sbox\bullorange{\tikz{\draw[orange,fill=orange] circle (.3ex);}}
\DeclareFontFamily{OMX}{MnSymbolE}{}
\DeclareSymbolFont{MnLargeSymbols}{OMX}{MnSymbolE}{m}{n}
\DeclareFontShape{OMX}{MnSymbolE}{m}{n}{
	<-6>  MnSymbolE5
	<6-7>  MnSymbolE6
	<7-8>  MnSymbolE7
	<8-9>  MnSymbolE8
	<9-10> MnSymbolE9
	<10-12> MnSymbolE10
	<12->   MnSymbolE12
}{}
\DeclareFontShape{OMX}{MnSymbolE}{b}{n}{
	<-6>  MnSymbolE-Bold5
	<6-7>  MnSymbolE-Bold6
	<7-8>  MnSymbolE-Bold7
	<8-9>  MnSymbolE-Bold8
	<9-10> MnSymbolE-Bold9
	<10-12> MnSymbolE-Bold10
	<12->   MnSymbolE-Bold12
}{}
\let\llangle\@undefined
\let\rrangle\@undefined
\DeclareMathDelimiter{\llangle}{\mathopen}%
{MnLargeSymbols}{'164}{MnLargeSymbols}{'164}
\DeclareMathDelimiter{\rrangle}{\mathclose}%
{MnLargeSymbols}{'171}{MnLargeSymbols}{'171}
\newcommand{\mathsout}[1]
{\bgroup\mathchoice
	{\sbox0{$\displaystyle{#1}$}%
		\usebox0\hspace{-\wd0}%
		\rule[0.5\ht0-0.5\dp0-.5pt]{\wd0}{1pt}}%
	{\sbox0{$\textstyle{#1}$}%
		\usebox0\hspace{-\wd0}%
		\rule[0.5\ht0-0.5\dp0-.5pt]{\wd0}{1pt}}%
	{\sbox0{$\scriptstyle{#1}$}%
		\usebox0\hspace{-\wd0}%
		\rule[0.5\ht0-0.5\dp0-.5pt]{\wd0}{1pt}}%
	{\sbox0{$\scriptscriptstyle{#1}$}%
		\usebox0\hspace{-\wd0}%
		\rule[0.5\ht0-0.5\dp0-.5pt]{\wd0}{1pt}}%
	\egroup}
\newcommand{\cut}[1]{}   
\newcommand{\mytag}[2]{%
	\text{#1}%
	\@bsphack
	\protected@write\@auxout{}%
	{\string\newlabel{#2}{{#1}{\thepage}}}%
	\@esphack
}
\newenvironment{packed_enum}{
	\begin{enumerate}
		\setlength{\itemsep}{1pt}
		\setlength{\parskip}{0pt}
		\setlength{\parsep}{0pt}
	}
	{\end{enumerate}}
\newlength\myboxwidth
\definecolor{code}{RGB}{230,230,230}
\newcommand{\introparagraph}[1]{\subparagraph{{\bf #1}}}  
\providecommand{\bx}[0]{\mathbf{x}}
\providecommand{\bu}[0]{\mathbf{u}}
\providecommand{\mA}[0]{\mathcal{A}}
\providecommand{\mB}[0]{\mathcal{B}}
\providecommand{\mL}[0]{\mathcal{L}}
\providecommand{\tOUT}[0]{\texttt{OUT}}
\providecommand{\fhw}[0]{\texttt{fhw}}
\providecommand{\edges}[0]{\mathcal{E}}
\providecommand{\nodes}[0]{\mathcal{V}}
\providecommand{\domain}[0]{\mathbf{dom}}
\providecommand{\qstar}[1]{Q^*_{#1}}
\providecommand{\htree}[0]{\mathcal{T}}
\providecommand{\bag}[0]{\mathcal{B}}
\providecommand{\fhw}[1]{\mathsf{fhw}(#1)}
\providecommand{\mw}[0]{\mathsf{w}}
\providecommand{\eat}[1]{}
\tikzset{
	ncbar angle/.initial=90,
	ncbar/.style={
		to path=(\tikztostart)
		-- ($(\tikztostart)!#1!\pgfkeysvalueof{/tikz/ncbar angle}:(\tikztotarget)$)
		-- ($(\tikztotarget)!($(\tikztostart)!#1!\pgfkeysvalueof{/tikz/ncbar angle}:(\tikztotarget)$)!\pgfkeysvalueof{/tikz/ncbar angle}:(\tikztostart)$)
		-- (\tikztotarget)
	},
	ncbar/.default=0.25cm,
}
\tikzset{square left brace/.style={ncbar=0.25cm}}
\tikzset{square right brace/.style={ncbar=-0.25cm}}
\tikzset{round left paren/.style={ncbar=0.25cm,out=120,in=-120}}
\tikzset{round right paren/.style={ncbar=0.25cm,out=60,in=-60}}
\begin{document}

\title[Enumeration Algorithms for Conjunctive Queries with Projection]{Enumeration Algorithms for\texorpdfstring{\\}{} Conjunctive Queries with Projection}

\thanks{We are grateful to the reviewers for a careful reading of the manuscript and their feedback.}
    
	\author[S.~Deep]{Shaleen Deep\lmcsorcid{0000-0003-2342-4060}}[a]
	\address{Department of Computer Sciences, University of Wisconsin-Madison, Madison, Wisconsin, USA}
	\email{shaleen@cs.wisc.edu}

    \author[X.~Hu]{Xiao Hu\lmcsorcid{0000-0002-7890-665X}}[b]
    \address{David R. Cheriton School of Computer Science, University of Waterloo, Waterloo, Canada}
    \email{xiaohu@uwaterloo.ca}
	
	\author[P.~Koutris]{Paraschos Koutris\lmcsorcid{0000-0001-6309-1702}}[c]
	\address{Department of Computer Sciences, University of Wisconsin-Madison, Madison, Wisconsin, USA}
	\email{paris@cs.wisc.edu}

	\maketitle
		
	\begin{abstract}
	We investigate the enumeration of query results for an important subset of CQs with projections, namely star and path queries. The task is to design data structures and algorithms that allow for efficient enumeration with delay guarantees after a preprocessing phase. Our main contribution is a series of results based on the idea of interleaving precomputed output with further join processing to maintain delay guarantees, which maybe of independent interest. In particular, for star queries, we design combinatorial algorithms that provide instance-specific delay guarantees in  linear preprocessing time. These algorithms improve upon the currently best known results. Further, we show how existing results can be improved upon by using fast matrix multiplication. We also present new results involving tradeoff between preprocessing time and delay guarantees for enumeration of path queries that contain projections. Boolean matrix multiplication is an important query that can be expressed as a CQ with projection where the join attribute is projected away. Our results can therefore also be interpreted as sparse, output-sensitive matrix multiplication with delay guarantees.
\end{abstract}

	\section{Introduction}
\label{sec:intro}

The efficient evaluation of join queries over static databases is a fundamental problem in data management. There has been a long line of research on the design and analysis of algorithms that minimize the total runtime of query execution in terms of the input and output size~\cite{yannakakis1981algorithms,skewstrikesback,ngo2012worst}. However, in many data processing scenarios, it is beneficial to split query execution into two phases: the {\em preprocessing phase}, which computes a space-efficient intermediate data structure, and the {\em enumeration phase}, which uses the data structure to enumerate the query results as fast as possible, to minimize the {\em delay} between outputting two consecutive tuples in the result. This distinction is beneficial for several reasons. For instance, in many scenarios, the user wants to see one (or a few) results of the query as fast as possible: in this case, we want to minimize the time of the preprocessing phase such that we can output the first results quickly. On the other hand, a data processing pipeline may require that the result of a query is accessed multiple times by a downstream task. In this case, spending more time during the preprocessing phase is better to guarantee a faster enumeration with a smaller delay. 
 
Previous work in the database literature has focused on finding the class of queries that can be computed with $O(|D|)$ preprocessing time (where $D$ is the input database instance) and constant delay during the enumeration phase. The main result in this line of work shows that full (i.e., without projections) acyclic Conjunctive Queries (CQs) admit linear preprocessing time and constant delay~\cite{bagan2007acyclic}. If the CQ is not full but its free variables satisfy the {\em free-connex} property, the same preprocessing time and delay guarantees can still be achieved. It is also known that for any (possibly non-full) acyclic CQ, it is possible to achieve linear delay after linear preprocessing time~\cite{bagan2007acyclic}. Prior work that used structural decomposition methods~\cite{greco2013structural} generalized these results to arbitrary CQs with free variables and showed that the projected solutions could be enumerated with $O(|D|^\fhw)$ delay, where   $\fhw$ is the {\em fractional hypertree width}~\cite{gottlob2014treewidth} of the query. Moreover, a dichotomy about the classes of conjunctive queries with fixed arities where such answers can be computed with polynomial delay (WPD) is also shown. When the CQ is full but not acyclic, factorized databases use $O(|D|^{\fhw})$ preprocessing time to achieve constant delay. 
 We should note here that we can always compute and materialize the query result during preprocessing to achieve constant delay enumeration, but this requires an exponential amount of space.

The aforementioned prior work investigates specific points in the preprocessing time-delay tradeoff space. While the story for full acyclic CQs is relatively complete, the same is not true for general CQs, even for acyclic CQs with projections. For instance, consider the simplest such query: $Q_{\texttt{two-path}} = \pi_{x,z} (R(x,y) {\Join} S(y,z))$, which joins two binary relations and then projects out the join attribute. For this query, \cite{bagan2007acyclic} ruled out a constant delay algorithm with linear time preprocessing unless the {boolean} matrix multiplication exponent is $\omega = 2$. However, we can obtain $O(|D|)$ delay with $O(|D|)$ preprocessing time. We can also obtain $O(1)$ delay with $O(|D|^2)$ preprocessing by computing and storing the full result. It is worth asking whether there are other interesting points in this tradeoff between preprocessing time and delay. 
Towards this end, seminal work by Kara et al.~\cite{kara19} showed that for any hierarchical CQ\footnote{Hierarchical CQs are a strict subset of acyclic CQs.} (possibly with projections), there always exists a smooth tradeoff between preprocessing time and delay. This is the first improvement over the results of Bagan et al.~\cite{bagan2007acyclic} in over a decade for queries involving projections. Applied to the query $Q_{\texttt{two-path}}$, the main result of~\cite{kara19} shows that for any $\epsilon \in [0,1]$, we can obtain $O(|D|^{1-\epsilon})$ delay with $O(|D|^{1+\epsilon})$ preprocessing time. 

In this article, we continue the investigation of the tradeoff between preprocessing time and delay for CQs with projections. We focus on two classes of CQs: {\em star queries}, a popular subset of hierarchical queries, and a useful subset of non-hierarchical queries known as {\em path queries}. We focus narrowly on these two classes for two reasons. First, star queries are of immense practical interest given their connections to set intersection, set similarity joins, and applications to entity matching (we refer the reader to~\cite{deep2020fast} for an overview). The most common star query seen in practice is $Q_{\texttt{two-path}}$~\cite{bonifati2020analytical}. The same holds for path queries, which are fundamental in graph processing. Second, as we will see in this article, the tradeoff landscape is complex and requires developing novel techniques even for the simple class of star queries. We also present a result on another subset of hierarchical CQs called left-deep. Our key insight is to design enumeration algorithms that depend not only on the input size $|D|$ but are also aware of other data-specific parameters, such as the output size. To give a flavor of our results, consider $Q_{\texttt{two-path}}$ as an example, and denote by $\tOUT_{\Join}$ the output of the underlying full query (i.e., $R(x,y) {\Join} S(y,z)$). We can show the following result:

\begin{thm} \label{thm:path2}
Given any database instance $D$, after $O(|D|)$ preprocessing time,  the output of $Q_{\texttt{\upshape two-path}} =  \pi_{x,z} (R(x,y) {\Join} S(y,z))$ can be enumerated with $O(|D|^2/ |\tOUT_{\Join}|)$ delay.
\end{thm}

At this point, the reader may wonder about the improvement from the above result.~\cite{kara19} implies that with preprocessing time $O(|D|)$, the delay guarantee in the worst-case is $O(|D|)$. This raises the question of whether the delay from Theorem~\ref{thm:path2} is truly an algorithmic improvement rather than an improved analysis of~\cite{kara19}. We answer the question positively. Specifically, we show that there exists a database instance where the delay obtained from Theorem~\ref{thm:path2} is a polynomial improvement over the actual guarantee~\cite{kara19} and not just the worst-case. \reva{Our proposed algorithm outperforms prior work by cleverly using the stored precomputed output to maintain the delay guarantee and alternate with further join processing that generates new output but is not fast enough to satisfy the delay guarantee}. When the preprocessing time is linear, the delay implied by our result depends on the size of the full join. For the worst-case output size where $|\tOUT_{\Join}| = \Theta(|D|^2)$, we obtain the best possible delay, which will be constant. Compare this to the result of~\cite{kara19}, 
which would require nearly $O(|D|^2)$ preprocessing time to achieve the same guarantee. On the other hand, if $|\tOUT_{\Join}| = \Theta(|D|)$, we obtain only a linear delay guarantee of $O(|D|)$.\footnote{We do not need to consider the case where $|\tOUT_{\Join}| \leq |D|$, since then we can materialize the full result during the preprocessing time {using constant delay enumeration for queries without projections~\cite{olteanu2015size}}.} The reader may wonder how our result compares in general with the tradeoff in~\cite{kara19} \changes{in the worst-case}; we will show that we can always get at least as good of a tradeoff point as the one in~\cite{kara19}.~\autoref{table:results} summarizes the prior work and the results present in this article.

\begin{figure*}
	\small
\begin{tabular}{lll@{}r}
	\toprule
	\textbf{Queries} & \textbf{Preprocessing}  & \textbf{Delay}  & \textbf{Source} \\
	\toprule
	{Arbitrary} acyclic CQ & $O(|D|)$ & $O(|D|)$  &  \cite{bagan2007acyclic} \\ \midrule
	\makecell[l]{Free-connex CQ (projections)} & $O(|D|)$ & $O(1)$   &  \cite{bagan2007acyclic} \\ \midrule
	Full CQ & $O(|D|^\fhw)$ & $O(1)$  & \cite{olteanu2016factorized} \\ \midrule
    \reva{ Arbitrary CQ with aggregations} & \reva{${O}(|D|^\texttt{faqw}  \log |D|)$} & \reva{$O(1)$} & \reva{\cite{abo2016faq} + \cite{olteanu2016factorized}} \\ \midrule
	Full CQ & ${O}(|D|^\texttt{subw} \log |D|)$ & $O(1)$  & \cite{abo2017shannon} + \cite{olteanu2016factorized} \\ \midrule

	\makecell[l]{Hierarchical CQ \\ (with projections)} & $O(|D|^{1+(\mw-1) \epsilon})$ & \makecell[l]{$O(|D|^{1-\epsilon})$, $\epsilon \in [0,1]$}   & \cite{kara19} \\ \midrule
	\makecell[l]{Star query with $k$  relations \\ (with projections)} & $O(|D|)$ & $O(\frac{|D|^{k/(k-1)}}{ |\tOUT_{\Join}|^{1/(k-1)}})$  &{\color{red} this article} \\ \midrule
	\makecell[l]{Path query with $k$ relations \\ (with projections)} & $O(|D|^{2-\epsilon/(k-1)})$ & \makecell[l]{$O(|D|^\epsilon),$ $ \epsilon \in [0,1)$}  & {\color{red} this article} \\	 \midrule
	\makecell[l]{Left-deep hierarchical CQ  \\ (with projections)} & $O(|D|)$ & $O(|D|^k/|\tOUT_{\Join}|)$ & {\color{red} this article} \\	\midrule
	\makecell[l]{Two path query \\ (with projections)} & $O(|D|^{\omega \cdot \epsilon})$ & \makecell[l]{$O(|D|^{1-\epsilon}),$ $ \epsilon \in [\frac{2}{\omega+1}, 1]$}  
	& {\color{red} this article} \\			
	\bottomrule
\end{tabular}
\caption{Preprocessing time and delay guarantees for different queries. $|\tOUT_{\Join}|$ denotes the size of join query under consideration but without any projections. \texttt{subw} denotes the submodular width of the query~\cite{marx2013tractable} and \texttt{faqw} denote the \textsf{FAQ}-width of the query~\cite{grohe2014constraint}. If the CQ contains no aggregations, it is known that $\texttt{faqw} = \texttt{fhw}$. {For each class of query, the total running time is $O(\min\{|D| \cdot |\tOUT_{\pi}|, |D|^\mathtt{subw} \log |D| + |\tOUT_{\pi}|\})$ where $|\tOUT_{\pi}|$ denotes the size of the query result.}} \label{table:results}
\end{figure*} 

\introparagraph{Our Contribution} In this article, we improve the state-of-the-art preprocessing time-delay tradeoff for a subset of CQs with projections. We summarize our main technical contributions below (highlighted in~\autoref{table:results}):

\begin{enumerate}

\item Our first contribution is a novel algorithm (Theorem~\ref{thm:star:delay}) that achieves output-dependent delay guarantees for star queries (formally defined in Section~\ref{subsec:cq}) after linear preprocessing time. Specifically, 
we can achieve $O(|D|^{k/(k-1)}/ |\tOUT_{\Join}|^{1/k-1})$ delay after linear preprocessing.
Our key idea is to identify an appropriate degree threshold to split a relation into partitions of {\em heavy} and {\em light}, which allows us to perform efficient enumeration. For star queries, our result implies no smooth tradeoff between preprocessing time and delay guarantees as stated in~\cite{kara19} for the general class of {hierarchical} queries. 

\item We introduce the novel idea of {\em interleaving} join query computation in the context of enumeration algorithms, which forms the foundation for our algorithms and may be of independent interest. Specifically, we show that it is possible to union the output of two algorithms $\mA$ and $\mA'$ with $\delta$ delay guarantee where $\mA$ enumerates query results with $\delta$ delay guarantees but $\mA'$ does not. This technique allows us to compute a subset of a query {\em on the fly} when enumeration with good delay guarantees is impossible.

\item We use fast matrix multiplication techniques to obtain a better preprocessing time-delay tradeoff than~\cite{kara19}.  We also show an algorithm for left-deep hierarchical queries with linear preprocessing time and output-dependent delay guarantees.

\item Finally, we present new results on preprocessing time-delay tradeoffs for non-hierarchical CQs, the class of path queries (formally defined in Section~\ref{subsec:cq}). 
Our results show that we can achieve delay $O(|D|^\epsilon)$ with preprocessing time $O(|D|^{2-\epsilon/(k-1)})$ for any $\epsilon \in [0,1)$.
\end{enumerate}

This article is the full version of a conference publication~\cite{deep2021enumeration}. We have added all of the proofs and intermediate results excluded from the paper. In particular, we have added the full proof of our main result -- enumeration delay obtained for star queries (Theorem~\ref{thm:star:delay}, Theorem~\ref{thm:main:leftdeep}, and Theorem~\ref{thm:path}) as well as the proofs of all the helper lemmas in~\autoref{sec:helper}. Additionally, we have added the full algorithm for the helper lemmas. We have also added Example~\ref{ex:quadratic} and Example~\ref{ex:enum:two} to improve the exposition. Finally, we have added a new result in~\autoref{subsec:interleaving} that transfers a result for static enumeration to the dynamic setting.

The remainder of the article is organized as follows. In~\autoref{sec:framework}, we give
preliminary concepts, definitions, and notation. In~\autoref{sec:helper}, we present useful lemmas used frequently throughout the article. In~\autoref{sec:proof}, we state the main result and compare it against prior work. As a warm-up step, we present the intuition and main ideas for the two-path query, which are then generalized to star queries. Toward the end of the section, we also discuss how matrix multiplication can be useful.~\autoref{sec:ldeep} and~\autoref{sec:path} are dedicated to studying a class of hierarchical queries that we call left-deep and path queries, respectively. Finally, we discuss past contribution measures in~\autoref{sec:related} and conclude in~\autoref{sec:conclusion}. 

	\section{Problem Setting}
\label{sec:framework}
In this section, we present the basic notation and terminology.
\subsection{Conjunctive Queries}
\label{subsec:cq}
We will focus on the class of {\em conjunctive queries} (CQs):
$$ \label{eq:q}
Q = \pi_{\mathbf{y}}(R_1(\bx_1) \Join R_2(\bx_2) \Join \ldots \Join R_n(\bx_n) )
$$
Here, the symbols $\mathbf{y},\bx_1, \dots, \bx_n$ are vectors that contain {\em variables} or {\em constants}. We say that $Q$ is {\em full} if there is no projection. We will typically use the symbols 
$x,y,z,\dots$ to denote variables, and $a,b,c,\dots$ to denote constants.
We use $Q(D)$ to denote the query's result $Q$ over input database $D$. In this paper, we will focus on CQs with no constants and no repeated variables in the same atom (both cases can be handled within a linear time preprocessing step, so this assumption is without any loss of generality). Such a query can be represented equivalently as a {\em hypergraph}  $\mathcal{H}_Q = (\nodes_Q, \edges_Q)$, where $\nodes_Q$ is the set of variables, and for each hyperedge $F \in \edges_Q$ there exists a relation $R_F$ with variables $F$. We will be particularly interested in two families of CQs fundamental to query processing: star and path queries. The {\em star query} with $k$ relations is expressed as:
$$
\qstar{k} = \pi_{x_1,x_2,\dots, x_k}R_1(x_1,y) \Join R_2(x_2, y) \Join \dots \Join R_k(x_k, y)
$$
where $x_1, \dots, x_k$ are distinct variables. 
The {\em path query} with $k$ relations is expressed as:
$$
P_{k} = \pi_{x_1, x_{k+1}}R_1(x_1, x_2) \Join R_2(x_2, x_3) \Join \dots \Join R_k(x_k, x_{k+1})
$$

\smallskip
\noindent {\bf Hierarchical Queries.}
 A CQ $Q$ is {\em hierarchical} if for any two of its variables, either {the sets of atoms} in which they occur are disjoint or one is contained in the other~\cite{suciu2011probabilistic}. For example, $\qstar{k}$ is hierarchical for any $k$, while $P_k$ is hierarchical only when $k \leq 2$.
 
\introparagraph{Join Size Bounds}
Let $\mathcal{H} = (\nodes, \edges)$ be a hypergraph, and $S \subseteq \nodes$.
A weight assignment $\bu = (u_F)_{F \in \edges}$ 
is called a {\em fractional edge cover} of $S$ if 
$(i)$ for every $F \in \edges, u_F \geq 0$  and $(ii)$ for every
$x \in S, \sum_{F:x \in F} u_F \geq 1$. 
The {\em fractional edge cover number} of $S$, denoted by
$\rho^*_{\mathcal{H}}(S)$ is the minimum of
$\sum_{F \in \edges} u_F$ over all fractional edge covers of $S$. 
We write $\rho^*(\mathcal{H}) = \rho^*_{\mathcal{H}}(\nodes)$.

\introparagraph{Tree Decompositions}
Let $\mathcal{H} = (\nodes, \edges)$ be a hypergraph of a CQ $Q$.
A {\em tree decomposition} of $\mathcal{H}$ is  a tuple 
$(\htree, (\bag_t)_{t \in V(\htree)})$ where $\htree$ is a tree, and 
every $\bag_t$ is a subset of $\nodes$, called the {\em bag} of $t$, such that 
	\begin{packed_enum}
		\item  
		each edge in $\edges$ is contained in some bag; and
		\item 
		for each variable $x \in \nodes$, the set of nodes $\{t \mid x \in \bag_t\}$ form a connected subtree of $\htree$.
	\end{packed_enum}

The {\em width} of a tree decomposition $(\htree, (\bag_t)_{t \in V(\htree)})$ is 
defined as $\max_{t \in V(\htree)} \rho^*(\bag_t)$, where
$\rho^*(\bag_t)$ is the minimum fractional edge cover of the vertices in $\bag_t$.
The {\em fractional hypertree width} of a CQ $Q$, denoted as $\fhw(Q)$, is the minimum width among all tree decompositions of its hypergraph.
A query is {\em acyclic} if $\fhw(Q)=1$.

\introparagraph{Computational Model}
To measure the running time of our algorithms, we will use the uniform-cost RAM 
model~\cite{hopcroft1975design}, where data values and pointers are of $O(1)$ size. All complexity results concern data complexity, where the query is assumed to be fixed. \revb{We will assume that a hash table or a hash map $H$ can insert and delete a key in constant time. Further, given a key $u$, it can check existence, and return the value for the key (denoted by $H[u]$) in constant time. In practice, hashing can only achieve amortized constant time for some of the operations~\cite{cormen2022introduction}. Therefore, in the paper, whenever we claim constant time operations for hash table, we mean amortized constant time.}

\subsection{Fast Matrix Multiplication}
\revb{The matrix multiplication exponent $\omega$ is the smallest number such that for any $\epsilon > 0$, there is an algorithm that multiplies two rational $n \times n$ matrices with at most $O(n^{\omega + \epsilon})$ arithmetic operations.} Let $A$ be a $U \times V$ matrix and $C$ be a $V \times W$ matrix over any field $\mathcal{F}$. $A_{i,j}$ is the shorthand notation for entry of $A$ located in row $i$  and column $j$. The matrix product is given by $(AC)_{i,j} = \sum_{k=1}^{V} A_{i,k} C_{k,j}$. Given their fundamental importance, algorithms for fast matrix multiplication are of great theoretical interest. We will use the following folklore lemma about rectangular matrix multiplication.

\begin{lem} \label{lem:matrix:multiplication}
	\revb{Let $\omega$ be the matrix multiplication exponent.} Let $\beta = \min\{U,V,W\}$. Fast matrix multiplication of Boolean matrices of size $U \times V$ and $V \times W$ runs in time $O(UVW \beta^{\omega-3})$.
\end{lem}

In Lemma~\ref{lem:matrix:multiplication}, matrix multiplication cost dominates the time required to construct the input matrices (if they have not been built already) for all $\omega \geq 2$. Fixing $\omega = 2$, rectangular matrix multiplication can be done in time $O(UVW/\beta)$.  
A long line of research on fast {square} matrix multiplication has dropped the complexity to $O(n^\omega)$, where $2 \leq \omega< 3$. The current best-known value is $\omega = 2.3715$~\cite{williams2024new}, but it is believed that the actual value is $2$, which would imply that the Boolean matrix product can be computed in time $O(n^{2 + o(1)})$. \revb{We refer the reader to Section 6 in~\cite{berkholz2020constant} for a discussion on the topic.}

\subsection{Problem Statement}
\label{subsec:ps}

Given a CQ $Q$ and an input database $D$, we want to enumerate the tuples in $Q(D)$ in any order.
We will study this problem in the enumeration framework similar to that of~\cite{Segoufin15}, where an algorithm can be decomposed into two phases:

\begin{itemize}
	\item {\bf Preprocessing:} it builds a data structure that takes space {\em $S_p$} in {\em preprocessing time} $T_p$.
	\item {\bf Enumeration:} it outputs $Q(D)$ with no repetitions. This phase has access to any data structures constructed in the preprocessing phase and can also use additional space of size $S_e$.
	The {\em delay} $\delta$ is the maximum time duration between outputting any pair of consecutive tuples (this includes the time to output the first tuple and the time to notify that the enumeration phase has been completed).  
\end{itemize}

We aim to study the tradeoff between the preprocessing time $T_p$ and delay $\delta$ for a given CQ $Q$. We want to achieve a constant delay in linear preprocessing time. As Table~\ref{table:results} shows, when $Q$ is full, after $O(|D|^\fhw)$ preprocessing time, we can enumerate the results with constant delay~\cite{olteanu2016factorized}. When $Q$ is acyclic, i.e., $\fhw=1$, we can achieve a constant delay with only linear preprocessing time. On the other hand,~\cite{bagan2007acyclic} shows that for every acyclic CQ, we can achieve a linear delay with linear preprocessing time.
\cut{When $Q$ has no projections, several results are known about enumeration algorithms. In particular, it is known that with, \footnote{The preprocessing time can be further improved to $T_p = O(|D|^\texttt{subw})$ where \texttt{subw} denotes the submodular width of the query.}. Here, $\fhw$ is the fractional hypertree width of $Q$. 
Much less is known about enumeration algorithms for joins followed by a projection. If $Q$ is {\em free-connex acyclic}, \cite{bagan2007acyclic} shows that enumeration can still be done with linear preprocessing time and constant delay. However, not all acyclic queries are free-connex acyclic.
For such queries, the simplest way to evaluate is to first compute the full join result, then deduplicate, and finally enumerate the result. This guarantees constant delay enumeration, but the preprocessing phase is prohibitively expensive, since it must compute the full join result, which in the worst case can be as large as $|D|^{\rho^*}$. For instance, suppose we want to compute the query $\pi_{\bx_1, \dots, \bx_k}(\qstar{k})$. In this case, $\rho^* = k$, and hence we would obtain preprocessing time $T_p = O(|D|^k)$. Note that, in contrast, since $\qstar{k}$ is acyclic, we can enumerate the full result (without projections) with $T_p = O(|D|)$ and constant delay. On the other extreme, . }
Recently,~\cite{kara19} showed that it is possible to get a tradeoff between the two extremes for the class of hierarchical queries, which are acyclic but not necessarily free-connex. This is the first non-trivial result that improves upon the linear delay guarantees given by~\cite{bagan2007acyclic} for CQs with projections.

\begin{thmC}[\cite{kara19}] \label{thm:basic:three}
	Consider a hierarchical CQ $Q$ with factorization width $\mw$, and an input instance $D$. Then, for any $\epsilon \in [0,1]$ there exists an algorithm that can preprocess $D$ in time $T_p =  {O}(|D|^{1+(\mw-1)\epsilon})$ and space $S_p = {O}(|D|^{1+(\mw-1)\epsilon})$ such that we can enumerate the query output with
	$\delta = O(|D|^{1-\epsilon})$ delay and $S_e = O(1)$ space.
\end{thmC}
 
{The factorization width $\mw$ of a CQ, originally introduced as $s^\uparrow$~\cite{olteanu2015size}, is a generalization of the $\fhw$ from Boolean to arbitrary CQs.} For a star query $\qstar{k}$, $\mw = k$.{ Observe that preprocessing time $T_p$ must always be smaller than the time required to evaluate the full join result. This is because if $T_p = \Theta(|\tOUT_{\Join}|)$, we can evaluate the full join and deduplicate the projection output, allowing us to obtain constant delay in the enumeration phase. Therefore, the tradeoff is more meaningful when $\epsilon$ can only take values between $0$ and $(\log_{|D|} |\tOUT_{\Join}| -1) / (\mw-1)$. In the worst-case, $|\tOUT_{\Join}| = |D|^\mw$ and $\epsilon$ can take any value between $0$ and $1$ (both inclusive), which is captured by the general result above.}
	
	\section{Helper Lemmas}
\label{sec:helper}

Before we present the proof of our main results, we discuss three useful lemmas which will be used frequently and may be of independent interest for enumeration algorithms. The first two lemmas are based on the key idea of \emph{interleaving query results}, which we describe next. {We note that the idea of interleaving computation has been explored in the past to develop dynamic algorithms with good worst-case bounds using static data structures~\cite{overmars1981dynamization}.} 

We say that an algorithm $\mA$ provides no delay guarantees, meaning its delay guarantee can be as large as its total execution time. In other words, if an algorithm requires time $T$ to complete, its delay guarantee is upper bound by $T$. Since we use the uniform-cost RAM model, each operation takes one unit of time.
 
\begin{lem}\label{lem:helper:one}
	Consider two algorithms $\mA, \mA'$ and two positive integers $T$ and $T'$ provided as a part of the input such that
	\begin{enumerate}
		\item $\mA$ enumerates query results in total time at most {$T$} with no delay guarantees.
		\item $\mA'$ enumerates query results with delay $\delta$ and runs in total time at least {$T'$}.
		\item The outputs of $\mA$ and $\mA'$ are disjoint.
	\end{enumerate}
	Then, the union of the outputs of $\mA$ and $\mA'$ can be enumerated with delay $c \cdot \delta \cdot \max\{1, T/T'\}$ for some constant $c$.
\end{lem}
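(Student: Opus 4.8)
The plan is to run $\mA$ and $\mA'$ concurrently by time-slicing the uniform-cost RAM, using the regular outputs of $\mA'$ as a ``heartbeat'' that enforces the delay bound, while the potentially bursty outputs of $\mA$ are buffered and flushed. The central difficulty is that $\mA$ offers no delay guarantee and could emit all of its tuples only at its very last step; running $\mA$ on its own would therefore incur a delay as large as $T$. To neutralize this, I would interleave the two executions at a rate chosen so that $\mA$ is guaranteed to finish within the time $\mA'$ runs, namely $\max\{1, T/T'\}$ steps of $\mA$ per step of $\mA'$. This is precisely why $T$ and $T'$ must be supplied as input: the interleaving rate has to be fixed in advance.

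Concretely, set $M = \max\{1, T/T'\}$ and proceed in rounds. In each round, advance $\mA'$ by up to $\delta$ of its own steps until it emits its next tuple (its delay guarantee ensures this happens within $\delta$ steps, as long as $\mA'$ has not terminated), and concurrently advance $\mA$ by $\lceil M \rceil \cdot \delta$ steps, collecting whatever tuples $\mA$ produces into a temporary set $S$. At the end of the round, emit the single tuple produced by $\mA'$ followed by all tuples in $S$, one after another. Since the outputs of $\mA$ and $\mA'$ are disjoint, this enumerates their union without repetition.

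Two things need to be verified: that $\mA$ actually finishes, and that the delay bound holds. For termination, $\mA'$ runs for at least $T'$ steps, so the enumeration consists of at least $\lfloor T'/\delta \rfloor$ rounds; across these rounds $\mA$ is advanced by at least $\lfloor T'/\delta\rfloor \cdot M\delta \ge M T' \ge T$ steps, which exceeds the total running time of $\mA$, so $\mA$ completes before $\mA'$ does. For the delay, observe that each round performs at most $\delta + \lceil M\rceil\delta$ steps of computation before it emits, and within a round the successive emissions (the heartbeat followed by the tuples of $S$) are $O(1)$ apart; hence the only nontrivial gap is across a round boundary, and it is bounded by the pre-emission computation of the next round, which is at most $(1 + \lceil M \rceil)\,\delta \le 3 M \delta = O\!\left(\delta \cdot \max\{1, T/T'\}\right)$. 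This yields the claimed bound $c \cdot \delta \cdot \max\{1, T/T'\}$ for a small constant $c$.

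The step I expect to require the most care is the tail of the computation, once $\mA'$ has terminated. At that point the termination argument above already guarantees that $\mA$ has finished and, because $S$ is flushed at the end of every round, that every tuple has been emitted; the final round merely detects the termination of $\mA'$ and reports completion, and its length is again $O(M\delta)$, so no delay violation occurs at the very end (recall the delay accounts for the time to notify completion). I would present the heartbeat-plus-buffer scheduling as the key idea, since it is exactly the mechanism that converts an algorithm with no delay guarantee into one that can be safely merged into a low-delay enumeration.
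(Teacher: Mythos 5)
Your proposal is correct and takes essentially the same approach as the paper: interleave the two executions at a rate of roughly $\max\{1, T/T'\}$ steps of $\mA$ per step of $\mA'$, chosen in advance from the given bounds $T$ and $T'$ so that $\mA$ is guaranteed to finish before $\mA'$ terminates, with $\mA'$'s delay guarantee serving as the heartbeat that bounds every gap by $O(\delta \cdot \max\{1, T/T'\})$. The differences are cosmetic (you use coarse rounds keyed to each emission of $\mA'$ and buffer $\mA$'s tuples until the round ends, whereas the paper alternates fixed-size chunks and emits $\mA$'s tuples as they are produced); the only thing to fix is the small rounding slip in your termination count, where $\lfloor T'/\delta \rfloor$ should be $\lceil T'/\delta \rceil$ --- since each round advances $\mA'$ by at most $\delta$ steps, the ceiling is what makes the total budget given to $\mA$ at least $M T' \geq T$.
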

\begin{proof}
	Let $\eta$ and $\gamma$ denote two positive values that will be fixed later. Note that after $\delta$ time has passed, we can emit one output result from $\mA'$. But since we also want to compute the output from $\mA$ that takes overall time $T$, we need to slow down the enumeration of $\mA'$ sufficiently so we do not run out of output from $\mA'$. This can be done by interleaving the two algorithms in the following way:  we run $\mA'$ for $\gamma$ operations, pause $\mA'$, then run $\mA$ for $\eta$ operations, pause $\mA$ and resume $\mA'$ for $\gamma$ operations, and so on. The pause and resume takes constant time (say $c_{\textsf{pause}}$ and $c_{\textsf{resume}}$) in RAM model where the state of registers and program counter can be stored and retrieved enabling pause and resume of any algorithm. Our goal is to find a value of $\eta$ and $\gamma$ such that $\mA'$ does not terminate until $\mA$ has finished. This condition is satisfied when the number of iterations of $\mA'$ is equal to the number of iterations of $\mA$. This gives us the condition that,
	$$ T'/\gamma \leq \text{(Time taken by }\mA'\text{)} / \gamma = \text{(Time taken by }\mA\text{)} / \eta \leq T/\eta $$
	
	Thus, any value of $\eta \leq T \cdot \gamma / T'$ is acceptable. We fix $\eta$ to be any positive integer constant and then set $\gamma$ to be the smallest positive integer that satisfies the condition. The delay is bounded by the product of worst-case number of iterations between two answers of $\mA'$ and the work done between each iteration which is $(\delta / \gamma) \cdot (\gamma + \eta + c_{\textsf{pause}} + c_{\textsf{resume}}) \leq \delta \cdot (1 + T/T' + (c_{\textsf{pause}} + c_{\textsf{resume}})/\gamma ) = O(\delta \cdot \max\{1, T/T'\})$.
\end{proof}

Lemma~\ref{lem:helper:one} tells us that as long as $T = O(T')$, the output of $\mA$ and $\mA'$ can be combined without giving up on delay guarantees by pacing the output of $\mA'$. Note that we need to know the exact values of $T$ and $T'$. This can be accomplished by calculating the number of operations in the algorithms $\mA$ and $\mA'$ to bound their running time. The next lemma introduces our second key idea of interleaving stored output result with \emph{on-the-fly} query computation. Algorithm~\ref{algo:space:interleaving} describes the detailed algorithm for Lemma~\ref{lem:helper:two}.

\begin{lem} \label{lem:helper:two}
   Suppose an algorithm $\mA$ can enumerate all output tuples in $T$ time with no delay guarantees, where $T$ is known in advance. \reva{Suppose a data structure can enumerate $J$ output tuples without duplication with constant delay}. Then, an algorithm exists that enumerates all output tuples with delay guarantee $\delta = O(T/J)$.
\end{lem}
\begin{proof}
	Let $\delta$ be a parameter to be fixed later. \reva{Create two empty hash sets $H$ and $H'$ that will be used for deduplication}. Using a similar interleaving strategy as above, \reva{we use the data structure to obtain one result from $J$ in constant time, insert it into $H'$,} and allow algorithm $\mA$ to run for $\delta$ time. {Whenever $\mA$ wants to emit an output tuple, it probes the hash set $H$ and $H'$, emits $t$ only if $t$ does not appear in $H$ and $H'$, followed by inserting $t$ in $H$. Inserting $t$ in $H$ will ensure that $\mA$ does not output duplicates}\footnote{If $\mA$ guarantees that it will generate results with no duplicates, then there is no need to use $H$.}. Each probe and insertion takes $O(1)$ time, so the total running time of $\mA$ is $O(T)$. Our goal is to choose $\delta$ such that $\mA$ terminates before the materialized output $J$ runs out. This condition is satisfied when $\delta \cdot J \geq O(T)$ which gives us $\delta = O(T/J)$. It can be easily checked that no duplicated result is emitted and $O(\delta)$ delay is guaranteed between every pair of consecutive results. Again, observe that we need the algorithm $\mA$ to be pausable, which means that we should be able to resume the execution from where we left off. This can be achieved by storing the contents of all registers in the memory and loading it when required to resume execution.
\end{proof}

\begin{algorithm}[t]
	\SetCommentSty{textsf}
	\DontPrintSemicolon 
	\SetKwInOut{Input}{Input}
	\SetKwInOut{Output}{Output}
	\SetKwFunction{len}{\textsf{len()}}
	\Input{Materialized output list $J$, Algorithm $\mA$ with known completion time $T$}
	\Output{Deduplicated result of $\mA$}
	\SetKwData{ptr}{\textsf{ptr}}
	\SetKwData{dedup}{\textsf{dedup}}
	\SetKwData{counter}{\textsf{counter}}
	$\delta \leftarrow O(T/J),  \ptr \leftarrow 0, \dedup \leftarrow 0$, $H \leftarrow \emptyset$, $H' \leftarrow \emptyset$; \tcc*{empty hash-set}
	{\While{$\ptr < |J|$}{
			output $J[\ptr]$; \tcc*{output result from $J$ to maintain delay guarantee}
            insert $J[\ptr]$ in $H'$\;
			$\ptr \leftarrow \ptr + 1, \counter \leftarrow 0$; \;
			\While{$\counter \leq  \delta$}{
				\If{$\mA$ has not completed}{
					\label{line:pause} Execute $\mA$ for $c$ time; \tcc*{$c$ is a constant }
                        \ForEach{tuple $t$ generated (if any)}{
						\If{$t \not \in H'$ and $t \not \in H$}{
							output $t$ and insert $t$ in $H$; \;
						}
					}
				}
				$\counter \leftarrow \counter + c$; \;
			}	
	}}
	\caption{{\sc Deduplicate}$(J, \mA)$}
	\label{algo:space:interleaving}
\end{algorithm} 

The final helper lemma allows us to enumerate the union of (possibly overlapping) results of $m$ different algorithms where each algorithm outputs its result {according to a total order $\preceq$}, such that the union is also enumerated according to $\preceq$. {This lemma is based on the idea presented as Fact $3.1.4$ in~\cite{kazana2013query}.}

\begin{lem} \label{lem:helper:three}
	Consider $m$ algorithms $\mA_1, \mA_2, \cdots, \mA_m$ such that each $\mA_i$ enumerates its output $L_i$ with delay $O(\delta)$ according to the total order $\preceq$. Then, the union of their output can be enumerated (without duplicates) with $O(m \cdot \delta)$ delay according to $\preceq$. 
\end{lem}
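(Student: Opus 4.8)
The plan is to adapt the standard $m$-way merge used for enumerating sorted streams, where each stream is produced by one of the algorithms $\mA_i$, and to handle duplicates carefully so that the total order $\preceq$ is preserved and no tuple is output twice. First I would run each $\mA_i$ just far enough to obtain its current ``head'' element, i.e. the $\preceq$-smallest element it has not yet released. Since each $\mA_i$ has delay $O(\delta)$, obtaining or advancing a single head costs $O(\delta)$ time. I would maintain a priority queue (or simply scan the $m$ heads, since $m$ is a constant in data complexity) keyed by $\preceq$, so that at each step I can identify the globally $\preceq$-smallest head among all $m$ algorithms.

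The main loop then repeatedly does the following: find the minimum head $t^\star$ over the $m$ current heads, output $t^\star$, and then advance \emph{every} algorithm whose current head equals $t^\star$ to its next element. The reason all copies of $t^\star$ must be advanced together is precisely to suppress duplicates: because each stream is individually sorted according to $\preceq$ and we always emit the global minimum, any occurrence of $t^\star$ in any stream must be sitting at that stream's head at this moment, so advancing all of them guarantees $t^\star$ is never emitted again, while no element $\prec t^\star$ can be missed. Advancing a single algorithm past its head costs $O(\delta)$, and since there are at most $m$ heads equal to $t^\star$, each iteration of the loop performs at most $m$ advances plus $O(m)$ work to locate and compare the heads (the priority-queue or linear-scan bookkeeping), for a total of $O(m \cdot \delta)$ work between two consecutive outputs. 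This establishes the claimed $O(m \cdot \delta)$ delay, and the argument above establishes both the sortedness and the absence of duplicates.

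A few details I would make explicit. When an algorithm $\mA_i$ has been exhausted, its head is treated as $+\infty$ (larger than every tuple in $\preceq$), so it is simply never selected again; the loop terminates once all heads are $+\infty$. I would also note that the delay bound must cover the time to produce the \emph{first} output and the time to signal completion: the first output requires initializing all $m$ heads, which is $O(m \cdot \delta)$, and termination detection is $O(m)$, both within budget. The whole scheme uses only $O(m)$ auxiliary state beyond whatever the $\mA_i$ internally maintain, and each $\mA_i$ must be pausable and resumable, exactly as in the interleaving arguments used for Lemma~\ref{lem:helper:one} and Lemma~\ref{lem:helper:two}.

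The step I expect to require the most care is the duplicate-suppression argument, specifically the claim that whenever we emit the global minimum $t^\star$, \emph{all} occurrences of $t^\star$ across the $m$ streams are currently at their respective heads. This relies on the invariant that at the start of each iteration, for every $i$, every element already emitted by the merged enumeration is $\preceq$ the head of $\mA_i$, together with the fact that each $L_i$ is produced in $\preceq$-sorted order; I would state and maintain this invariant explicitly, since it is what simultaneously guarantees correctness of the order, completeness (nothing skipped), and non-repetition. Everything else is routine $m$-way-merge bookkeeping, and since $m$ is a constant under data complexity the $O(m)$ factors are benign.
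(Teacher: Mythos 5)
Your proposal is correct and is essentially the same as the paper's own proof: the paper's \textsc{Merge} algorithm (Algorithm~\ref{algo:merging}) likewise keeps each $\mA_i$'s locally minimum element, outputs the global minimum, advances exactly those algorithms whose head equals the emitted value (which is what suppresses duplicates), and charges $O(m)$ for the minimum computation plus $O(m\cdot\delta)$ for the advances per output. Your explicit statement of the head invariant and the $+\infty$/null convention for exhausted streams are just more detailed renderings of the same argument.
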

\begin{proof}
	See~\autoref{algo:merging}. For simplicity of exposition, we assume that $\mA_i$ outputs a null value when it finishes enumeration. Note that results enumerated by one algorithm are in order. Thus, it always outputs the local minimum result ($e_i$) over the remaining result to be enumerated. \autoref{algo:merging} goes over all local minimum results over all algorithms and outputs the smallest one (denoted $w$) as the global minimum result (\autoref{line:min}). Once a result is enumerated, each $\mA_i$ must check whether its $e_i$ matches $w$. If yes, then $\mA_i$ needs to update its local minimum result by finding the next one. Then,~\autoref{algo:merging} repeats this loop until all algorithms finish enumeration. One distinct result is enumerated in each iteration of the while loop. It takes $O(m)$ time to find the globally minimum result and $O(m \cdot \delta)$ to update all local minimum results (\autoref{line:for:start}-\autoref{line:for:end}). Thus, Algorithm~\ref{algo:merging} has a delay of $O(m \cdot \delta)$.
\end{proof}

\begin{algorithm}[t]
	\SetCommentSty{textsf}
	\DontPrintSemicolon 
	\SetKwFunction{fullreducer}{\textsc{Materialize}}
	\SetKwFunction{initializepq}{\textsc{InitializePQ}}
	\SetKwFunction{recurse}{\textsc{Recurse}}
	\SetKwFunction{proc}{\textsf{eval}}
	\SetKwFunction{pop}{\textsc{pop}}
	\SetKwData{pointer}{\textsf{pointer}}
	\SetKwData{temp}{\textsf{temp}}
	\SetKwFunction{fillout}{\textsc{FillOUT}}
	\SetKwInOut{Input}{\textsc{input}}
	\SetKwInOut{Global}{\textsc{global variables}}
	\SetKwInOut{Output}{\textsc{output}}
	\SetKwProg{myproc}{\textsc{procedure}}{}{}
	
	$S \gets \{1,2,\cdots, m\}$; \label{line:one}\;
	\lForEach{$i \in S$}{$e_i \gets \mA_i.first()$; \label{line:three}}
	\While{$S \neq \emptyset$}{
		$w \gets \min_{i \in S} e_i$\label{line:min}; \tcc*{finds the smallest output (using $\preceq$) over all algorithms} 
		{\bf output} $w$\label{line:output}; \;
		\ForEach{$i \in S$\label{line:for:start}}{
			\lIf{$e_i = w$}{
				$e_i \gets \mA_i.next()$; \label{line:for:end}}
			\lIf{$e_i = null$}{$S \gets S- \{i\}$; 
   }
		}
	}
	\caption{{\sc Merge}$(\mA_1, \mA_2, \cdots, \mA_m)$}
	\label{algo:merging}
\end{algorithm}

Implied by Lemma~\ref{lem:helper:three}, the {\em list merge} problem can be done with delay guarantees: Given $m$ lists $L_1, L_2, \cdots, L_m$ whose elements are drawn from a common domain, if elements in $L_i$ are distinct (i.e. no duplicates) and {ordered according to $\preceq$}, then the union of all lists $\bigcup_{i=1}^m L_i$ can be enumerated {in sorted order given by $\preceq$} with delay $O(m)$. Note that the enumeration algorithm $\mA_i$ degenerates to going over elements one by one in list $L_i$, which has $O(1)$ delay guarantee as long as indexes/pointers within $L_i$ are well-built. 
Throughout the paper, we use this primitive as {\sc ListMerge}$(L_1, L_2, \cdots, L_m)$.
	\section{Star Queries}
\label{sec:proof}

We now study enumeration algorithms for star queries. 
We start with a detailed discussion on how our result improves over prior work in Section~\ref{sec:prior:Work}, then a warm-up proof for two-path query 
in Section~\ref{sec:warmup}, and the complete proof for general star queries in Section~\ref{sec:main}.
\begin{thm} \label{thm:star:delay}
	For a star query $\qstar{k}$ and any instance $D$, there is an algorithm with preprocessing time $T_p = O(|D|)$ and preprocessing space $S_p = O(|D|)$, such that we can enumerate $\qstar{k}(D)$ with
	$\delta = O\bigg(\frac{|D|^{k/k-1}}{|\tOUT_{\Join}|^{1/k-1}}\bigg)$ delay and $S_e = O(|D|)$ space.
\end{thm}
In the above theorem, the delay depends on the full join result size $|\tOUT_{\Join}| = |\qstar{k}(D)|$. As the join size increases, the algorithm can obtain better delay guarantees. In the extreme case when $|\tOUT_{\Join}| = \Theta(|D|^k)$, it achieves constant delay with linear time preprocessing. In the other extreme, when $|\tOUT_{\Join}| = \Theta(|D|)$, it achieves linear delay.  When $|\tOUT_{\Join}|$ has linear size, we can compute and materialize the query result in linear preprocessing time and achieve constant-delay enumeration. Generalizing this observation, we can always achieve constant delay when $T_p$ is sufficient to evaluate the full join result.

 \begin{figure*}[h]
	\begin{subfigure}{0.4\linewidth}
		\hspace*{-6em}
		\includegraphics[scale=0.50]{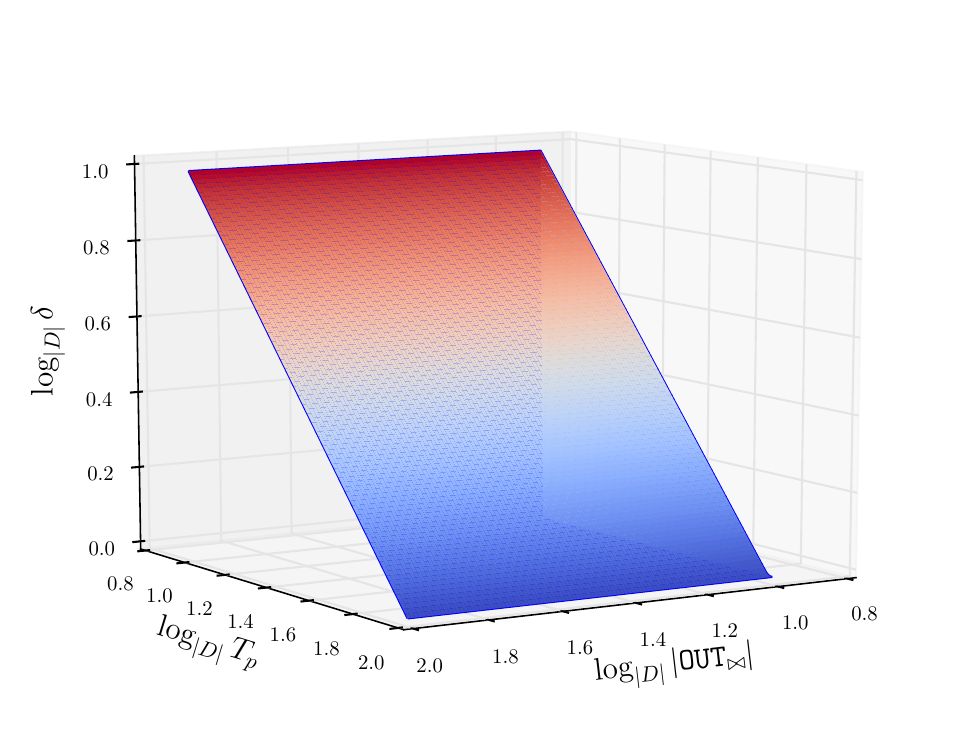}
		
		\label{subfig:one}
	\end{subfigure}
	\hspace*{-2em}
	\begin{subfigure}{0.4\linewidth}
		\includegraphics[scale=0.50]{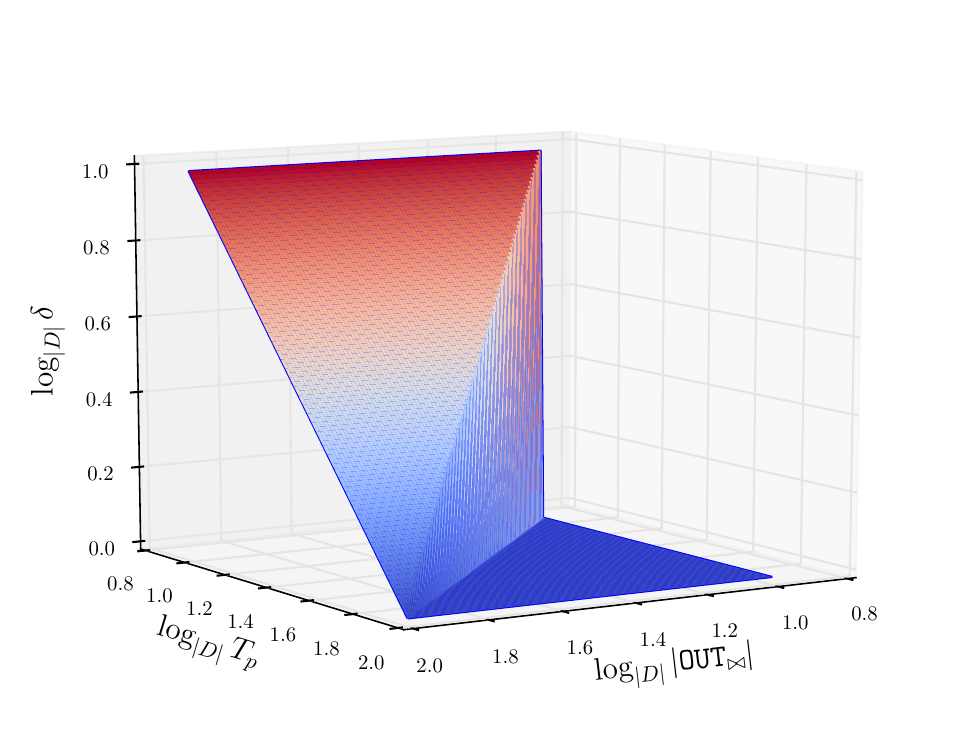}
		
		\label{subfig:two}		
	\end{subfigure}
    \vspace{-2em}
	\caption{Worst-case tradeoffs given by Theorem~\ref{thm:basic:three} without (left) and with (right) taking $|\tOUT_{\Join}|$ into consideration.}
	\label{fig:plots:one}
\end{figure*}
\begin{figure}[h]
 \vspace{-1.5em}
	\hspace*{-2em}
	\minipage{0.47\textwidth}
	\vspace{-1em}
	\includegraphics[scale=0.5]{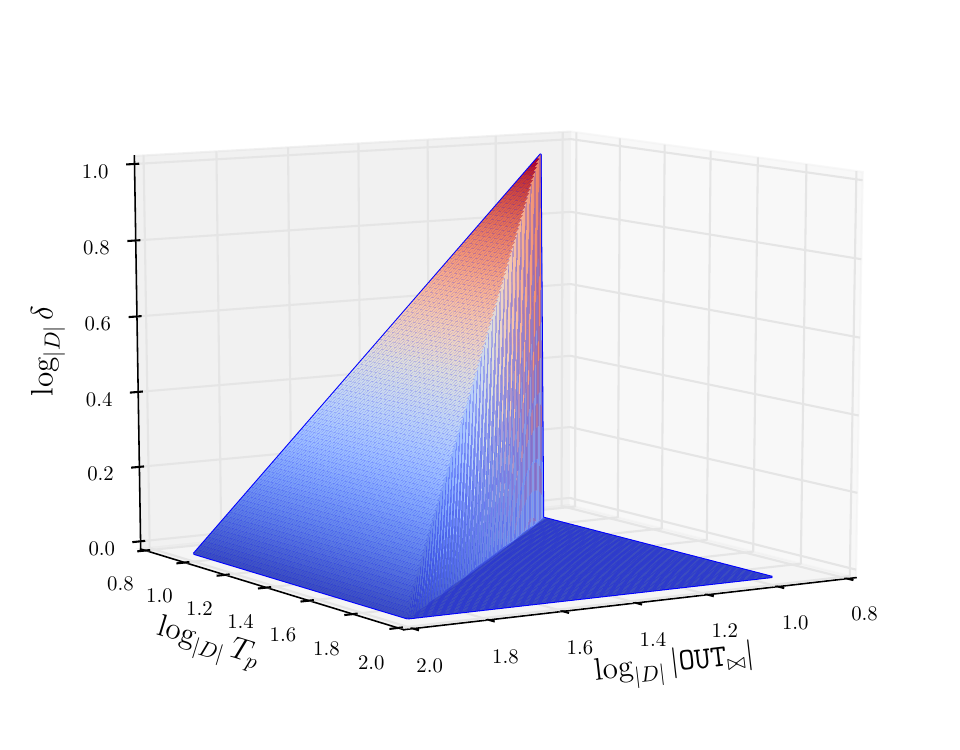}
    \vspace{-2em}
	\caption{Theorem~\ref{thm:star:delay} for $k=2$.} \label{subfig:four}	
	\endminipage \ \
	\minipage{0.52\textwidth}
	\vspace{1.75em}
	\begin{tikzpicture}
        \draw [<->,thick] (0,3.9) node (yaxis) [above] {}
        |- (5.7,0) node (xaxis) [below] {};
     \node [rotate=90] at (-0.6,2.0) {\small Delay $\log_{|D|} \delta$};   
     \node at (2.75,-0.6) {\small Preprocessing time $\log_{|D|} T_p$};
     \node[diamond,draw,scale=0.4,fill=green] at (3,0.3) { };
     \node[regular polygon,regular polygon sides=3,draw,scale=0.3,fill=blue] at (1,0.3) { };
     \node[circle,draw,scale=0.4,fill=red] at (1,2) { };
     \node[star,star points=5,draw,scale=0.3,fill=yellow] at (1,3.5) { };
     \draw[very thick,blue] (1,3.5) -- (3,2);
     \draw[dotted, black] (3,2) -- (3,0);
     \draw[dotted, black] (3,2) -- (0,2);
     \draw[dotted, black] (1,3.5) -- (1,0) node[below] {\tiny $1$};
     \draw[dotted, black] (3,0.3) -- (3,0) node[below] {\tiny $\log_{|D|} T_p$};
     \draw[dotted, black] (1,2) -- (0,2) node[left] {\tiny $\epsilon^\star$};
     \draw[dotted, black] (1,3.5) -- (0,3.5)  node[left] {\tiny $1$};
     \draw[dotted, black] (3,0.3) -- (0,0.3) node[left] {\tiny $0$};
     \matrix [draw,below left] at (current bounding box.north east) {
      \node [star,star points=5,draw,scale=0.3,fill=yellow,label=right:{\tiny $\alpha$-acyclic}] {}; \\
      \node [diamond,draw,scale=0.4,fill=green,label=right:{\tiny Conjunctive}] {}; \\
      \node [regular polygon,regular polygon sides=3,draw,scale=0.3,fill=blue,label=right:{\tiny free-connex}] {}; \\
      \node [circle,draw,scale=0.4,fill=red,label=right:{\tiny Theorem~\ref{thm:star:delay}}] {}; \\
    };
    \end{tikzpicture}
     \vspace{-1em}
	\caption{Trade-off in the worst-case for star query. Solid blue line shows Theorem~\ref{thm:basic:three}.} 
	\label{subfig:three}
	\endminipage\hfill
\end{figure}

\subsection{Comparison with Prior Work} \label{sec:prior:Work}

It is instructive now to compare the \changes{worst-case} delay guarantee obtained by Theorem~\ref{thm:basic:three} for $\qstar{k}(D)$ with Theorem~\ref{thm:star:delay}. {Suppose that we want to achieve delay $\delta = O(|D|^{1-\epsilon})$ for some $\epsilon \in [0, (\log_{|D|} |\tOUT_{\Join}| -1) / (k-1)]$. Theorem~\ref{thm:basic:three} tells us that this requires $O(|D|^{1+\epsilon(k-1)})$ preprocessing time. Then, it holds that:
	
$$ |D|^{1-\epsilon} \geq |D|^{1 - \frac{(\log_{|D|} |\tOUT_{\Join}| -1)}{k-1}} = |D|^{\frac{k - \log_{|D|}|\tOUT_{\Join}|}{k-1}} = |D|^{k/k-1} / |\tOUT_{\Join}|^{1/k-1}  $$

In other words, either we have enough preprocessing time to materialize the output and achieve constant delay, or we can achieve the desirable delay with linear preprocessing time.

Figure~\ref{fig:plots:one}, ~\ref{subfig:four} and  ~\ref{subfig:three} show the existing and new tradeoff results. Figure~\ref{fig:plots:one} shows the tradeoff curve obtained from Theorem~\ref{thm:basic:three} by adding $|\tOUT_{\Join}|$ as a third dimension, and adding the optimization for constant delay when $T_p \geq O(|\tOUT_{\Join}|)$. Figure~\ref{subfig:four} shows the tradeoff obtained from our result, while Figure~\ref{subfig:three} shows other existing results for a fixed value of $|\tOUT_{\Join}|$. For a fixed value of $|\tOUT_{\Join}|$, the delay guarantee does not change in Figure~\ref{subfig:four} as we increase $T_p$ from $|D|$ to $|\tOUT_{\Join}|$. It remains an open question to decrease the delay further if we allow more preprocessing time. Such an algorithm would correspond to a curve connecting the red point({\color{red} $\bullet$}) and the green diamond(\tikz{\node[draw=green,fill=green,diamond,shape border rotate=90,minimum
width=0.2cm,minimum height=0.2cm,inner sep=0pt] at (0,0) {};}) in Figure~\ref{subfig:three}. Our results thus imply that, depending on $|\tOUT_{\Join}|$, one must choose a different algorithm to achieve the optimal tradeoff between preprocessing time and delay. Since $|\tOUT_{\Join}|$ can be computed in linear time (using a simple adaptation of Yannakakis algorithm~\cite{yannakakis1981algorithms, pagh2006scalable}), this can be done without affecting the preprocessing bounds}.

\begin{figure}
	\begin{subfigure}{0.45\linewidth}
 \centering
		\scalebox{0.85}{
			\begin{tikzpicture}
			\tikzset{vertex/.style={circle, fill=black!25, minimum size=8pt}}
			
			\node[color=red] at (-5, 1) {$R(x,y)$};
			\node[color=blue] at (-3, 1) {$S(y,z)$};
			
			\node[vertex] (X-1) at (-6,-1) {$d_2$};
			\node[vertex] (Y-1) at (-4,-1) {$e_1$};
			\node[vertex] (Z-1) at (-2,-1) {$f_2$};
			\node[vertex] (X-2) at (-6,0) {$d_1$};		
			\node[vertex] (Z-2) at (-2,0) {$f_1$};
			\node[draw=none] at (-2,-1.5) {$\boldsymbol{\vdots}$};
			\node[draw=none] at (-6,-1.5) {$\boldsymbol{\vdots}$};				
			\node[vertex] (X-n) at (-6,-2.5) {$d_{{N}}$};
			\node[vertex] (Z-n) at (-2,-2.5) {$f_{{N}}$};

			\draw[color=red, line width=0.2mm] (X-1) -- node [above, color=black] {} (Y-1);
			\draw[color=red, line width=0.2mm] (X-2) -- node [above, color=black] {} (Y-1);		
			\draw[color=red, line width=0.2mm] (X-n) -- node [above, color=black] {} (Y-1);		
			\draw[color=blue, line width=0.2mm] (Y-1) -- node [above, color=black] {}  (Z-1);
			\draw[color=blue, line width=0.2mm] (Y-1) -- node [above, color=black] {}  (Z-2);
			\draw[color=blue, line width=0.2mm] (Y-1) -- node [above, color=black] {}  (Z-n);		
			\end{tikzpicture}}
		\caption{Database $D_0$ with full join size $N^{2}$.}	\label{fig:quadratic}
	\end{subfigure}
	\hfill
	\begin{subfigure}{0.45\linewidth}
    \centering
		\scalebox{0.85}{
			\begin{tikzpicture}
			\tikzset{vertex/.style={circle, fill=black!25, minimum size=8pt}}
			
			\node[color=red] at (-5, 1) {$R(x,y)$};
			\node[color=blue] at (-3, 1) {$S(y,z)$};
			
			\node[vertex] (X-1) at (-6,-1) {$a_2$};
			\node[vertex] (Y-1) at (-4,-1) {$b_2$};
			\node[vertex] (Z-1) at (-2,-1) {$c_2$};
			\node[vertex] (X-2) at (-6,0) {$a_1$};
			\node[vertex] (Y-2) at (-4,0) {$b_1$};			
			\node[vertex] (Z-2) at (-2,0) {$c_1$};
			\node[draw=none] at (-4,-1.5) {$\boldsymbol{\vdots}$};
			\node[draw=none] at (-2,-1.5) {$\boldsymbol{\vdots}$};
			\node[draw=none] at (-6,-1.5) {$\boldsymbol{\vdots}$};				
			\node[vertex] (X-n) at (-6,-2.5) {$a_{\sqrt{N}}$};
			\node[vertex] (Y-n) at (-4,-2.5) {$b_{\sqrt{N}}$};		
			\node[vertex] (Z-n) at (-2,-2.5) {$c_{\sqrt{N}}$};

			\draw[color=red, line width=0.2mm] (X-1) -- node [above, color=black] {} (Y-1);
			\draw[color=red, line width=0.2mm] (X-1) -- node [above, color=black] {} (Y-2);
			\draw[color=red, line width=0.2mm] (X-1) -- node [above, color=black] {} (Y-n);		
			\draw[color=red, line width=0.2mm] (X-2) -- node [above, color=black] {} (Y-1);		
			\draw[color=red, line width=0.2mm] (X-2) -- node [above, color=black] {} (Y-2);
			\draw[color=red, line width=0.2mm] (X-2) -- node [above, color=black] {} (Y-n);		
			\draw[color=red, line width=0.2mm] (X-n) -- node [above, color=black] {} (Y-1);		
			\draw[color=red, line width=0.2mm] (X-n) -- node [above, color=black] {} (Y-2);
			\draw[color=red, line width=0.2mm] (X-n) -- node [above, color=black] {} (Y-n);		
			\draw[color=blue, line width=0.2mm] (Y-1) -- node [above, color=black] {}  (Z-1);
			\draw[color=blue, line width=0.2mm] (Y-1) -- node [above, color=black] {}  (Z-2);
			\draw[color=blue, line width=0.2mm] (Y-1) -- node [above, color=black] {}  (Z-n);				
			\draw[color=blue, line width=0.2mm] (Y-2) -- node [above, color=black] {}  (Z-1);		
			\draw[color=blue, line width=0.2mm] (Y-2) -- node [above, color=black] {}  (Z-2);
			\draw[color=blue, line width=0.2mm] (Y-2) -- node [above, color=black] {}  (Z-n);		
			\draw[color=blue, line width=0.2mm] (Y-n) -- node [above, color=black] {}  (Z-1);	
			\draw[color=blue, line width=0.2mm] (Y-n) -- node [above, color=black] {}  (Z-2);			
			\draw[color=blue, line width=0.2mm] (Y-n) -- node [above, color=black] {}  (Z-n);
			\end{tikzpicture}}
		\caption{Database $D_1$ with full join size $N^{3/2}$.}	\label{fig:mmul}
	\end{subfigure}
	\caption{$D_0 \cup D_1$ forms a database where Theorem~\ref{thm:star:delay} improves the delay guarantee of Theorem~\ref{thm:basic:three}.}
\end{figure}

\changes{Next, we show how our result provides an algorithmic improvement over Theorem~\ref{thm:basic:three}. Consider the instances $D_0, D_1$ depicted in Figure~\ref{fig:quadratic} and Figure~\ref{fig:mmul} respectively, and assume we want to use linear preprocessing time \reva{(i.e., Theorem~\ref{thm:basic:three} is used with $\epsilon = 0$)}. 

 For $D_1$, the algorithm of Theorem~\ref{thm:basic:three} materializes nothing, since no $y$ valuation has a degree of $O(|D|^0)$, and the delay will be $\Theta(\sqrt{N})$. No materialization also occurs for $D_0$, but the delay will be $O(1)$ here. It is easy to check that our algorithm matches the delay in both instances. 
 Now, consider the instance $D = D_0 \cup D_1$. The input size for $D$ is $\Theta(N)$, while the full join size is $N^{3/2} + N^2 = \Theta(N^2)$. The algorithm of Theorem~\ref{thm:basic:three} will again achieve only a $\Theta(\sqrt{N})$ delay since after the linear time preprocessing, no $y$ valuations can be materialized. In contrast, our algorithm still guarantees a constant delay. This algorithmic improvement results from the careful overlapping of the constant-delay computation, for instance, $D_0$, with the computation for $D_1$. 
 
The above construction can be generalized as follows. Let $\alpha \in (0,1)$ be some constant. $D_0$ remains the same. For $D_1$, we construct $R$ to be the cross product of $N^\alpha$ $x$-values and $N^{1-\alpha}$ $y$-values, and $S$ to be the cross product of $N^\alpha$ $z$-values and $N^{1-\alpha}$ $y$-values. As before, let $D = D_0 \cup D_1$. The input size for $D$ is $\Theta(N)$, while the full join size is $N^{2-\alpha} + N^2 = \Theta(N^2)$. Hence, our algorithm achieves constant delay with linear preprocessing time. In contrast, the algorithm of Theorem~\ref{thm:basic:three} achieves $\Theta(N^{1-\alpha})$ delay with linear preprocessing time. In fact, the $\Theta(N^{1-\alpha})$ delay occurs even if we allow $O(N^{1+\epsilon})$ preprocessing time for any $\epsilon < \alpha$. We can similarly show that there also exists an instance where achieving constant delay using Theorem~\ref{thm:basic:three} requires near quadratic preprocessing time as shown Example~\ref{ex:quadratic} below.
}

\begin{exa} \label{ex:quadratic}
	We construct an instance where achieving constant delay with Theorem~\ref{thm:basic:three} would require close to $\Theta(|D|^2)$ computation. Let us fix $N$ to be a power of $2$. We will fix $|\domain(x)| = |\domain(z)| = N \log N$. Let $D_i$ be the database constructed by setting $N^\alpha = 2^i$ for $i \in \{1, 2, \dots, \log N\}$ where relation $R$ is the cross product of $N^\alpha$ $x$-values and $N^{1-\alpha}$ $y$-values, and $S$ is the cross product of $N^\alpha$ $z$-values and $N^{1-\alpha}$ $y$-values. We also construct a database $D^*$ consisting of a single $y$ connected to all $x$ and $z$ values. Let $D = D^* \cup D_1 \cup D_2 \cup \dots \cup D_{\log N}$. It is now easy to see that $|D| = N \cdot \log N, |\domain(y)| = \sum_{\alpha} N^{1- \alpha} \leq 2N = \Theta(|D|/\log |D|)$ and $|\tOUT_{\Join}| = \sum_{\alpha} N^{1+\alpha
	} + N^2 \log ^2 N = \Theta(|D|^2)$. On this instance, Theorem~\ref{thm:basic:three} achieves $\Theta(|D|/\log |D|)$ after linear time preprocessing. Suppose we target constant-delay enumeration. Let us fix this constant as $c^*$ (a power of $2$) for simplicity. Then, we need enough preprocessing time to materialize the join result of all database instances $D_i$ where $i \in \{1, 2, \dots, \log (N/c^*)\}$ to ensure that the number of heavy $y$ values that remain is at most $c^*$. Hence, $T_p > \sum_{i \in \{1, 2, \dots, \log (N/c^*)\}} N \cdot 2^i > N^2/c^* = \Theta(|D|^2/ \log |D|)$. Theorem~\ref{thm:basic:three} requires near quadratic computation to achieve constant-delay enumeration. 
\end{exa}

\subsection{Warm-up: Two-Path Query}
\label{sec:warmup}
As warm-up, we show an algorithm for $Q_{\texttt{\upshape two-path}} = \pi_{x,z} (R(x,y) {\Join} S(y,z))$ that achieves $O(|D|^2/|\tOUT_{\Join}|)$ delay with linear preprocessing time. At a high level, we will decompose the join into two subqueries with disjoint outputs. The subqueries will be generated based on whether a valuation for $x$ is \emph{light} or not based on its degree in relation $R$. For all light valuations of $x$ (degree at most $\delta$), we will show that their enumeration is achievable with delay $\delta$. For the heavy $x$ valuations, we will show that they also can be computed {\em on-the-fly} while maintaining the delay guarantees.

\introparagraph{Preprocessing Phase} We first process the input relations by removing dangling\footnote{\revb{A tuple $t \in R_\ell(\bx_\ell)$ is said to be dangling if $R_1(\bx_1) \Join \dots \Join R_{\ell-1}(\bx_{\ell-1}) \Join t \Join R_{\ell+1}(\bx_{\ell+1}) \Join \dots R_n(\bx_n)$ is empty.}} tuples. During the preprocessing phase, we will store the input relations as a hash map and sort the valuations for $x$ in increasing order of their degree. {Using any comparison based sorting technique requires $\Omega(|D| \log |D|)$ time in general. Thus, if we wish to remove the $\log |D|$ factor, we must use non-comparison-based sorting algorithms. In this paper, we will use count sort~\cite{cormen2009introduction}, which has complexity $O(|D| + r)$ where $r$ is the range of the non-negative key values. However, we need to ensure that all relations in the database $D$ satisfy the bounded range requirement. This can be easily accomplished by introducing a bijective function $f: \domain(D) \rightarrow \{1, 2, \dots, |D|\}$ that maps all values in the active domain of the database to some integer between $1$ and $|D|$ (both inclusive). Both $f$ and its inverse $f^{-1}$ can be stored as hash tables: suppose there is a counter $c \leftarrow 1$. We perform a linear pass over the database and check if some value $v \in \domain(D)$ has been mapped or not (by checking if there exists an entry $f(v)$). If not, we set $f(v) = c, f^{-1}(c) = v$ and increment $c$. Once the hash tables $f$ and $f^{-1}$ have been created, we modify the input relation $R$ (and $S$ similarly) by replacing every tuple $t \in R$ with tuple $t' = f(t)$. Since the mapping is a relabeling scheme, such a transformation preserves the degree of all the values. The codomain of $f$ is also equipped with a total order $\preceq$ (we will use $\leq$). Note that $f$ is not an order-preserving transformation in general, but our algorithms do not require this property.} 

Next, for every tuple $t \in R(x,y)$, we create a hash map with key $\pi_x (t)$, and the value is a list to which $\pi_y (t)$ is appended; and for every tuple $t \in S(y,z)$, we create a hash map with key $\pi_y (t)$ and the value is a list to which $\pi_z (t)$ is appended. For the second hash map, we sort the value list using {sort order $\preceq$} for each key once each tuple $t \in S(y,z)$ has been processed. Finally, we sort all values in $\pi_{x}(R)$ in increasing order of their degree in $R$ (i.e., $|\sigma_{x = v_i} R(x,y)|$ is the sort key). Let $\mL = \{v_1, \dots, v_n\}$ denote the ordered set of these values sorted by their degree and let $d_1, \dots, d_n$ be their respective degrees. {Creating the sorted list $\mL$ takes $O(|D|)$ time since the degrees $d_i$ satisfy the bounded range requirement (i.e $1 \leq d_i \leq |D|$). Next, we identify the smallest index $i^*$ such that 
\begin{align}
\sum_{v : \{v_1, v_2, \dots, v_{i^*}\}} |R(v, y) \Join S(y,z)| \geq \sum_{v : \{ v_{i^*+1}, \dots, v_n\}} |R(v, y) \Join S(y,z)| \label{cond}
\end{align} 
This can be computed by doing a linear pass on $\mL$ using a simple adaptation of Yannakakis algorithm~\cite{yannakakis1981algorithms, pagh2006scalable}.} This entire phase takes time $O(|D|)$.

\introparagraph{Enumeration Phase} The enumeration algorithm interleaves the following two loops using the construction in Lemma~\ref{lem:helper:one}. Specifically, it will spend an equal amount of time (a constant) before switching to the computation of the other loop.
\begin{algorithm}[t]
	\SetCommentSty{textsf}
	\DontPrintSemicolon 
	\SetKwFunction{beginn}{\textsf{begin()}}
	\SetKwFunction{endd}{\textsf{end()}}
	\SetKwData{left}{\textsf{left}}
	\SetKwData{right}{\textsf{right}}
	\SetKwFunction{fillout}{\textsc{FillOUT}}
	\SetKwInOut{Input}{\textsc{input}}
	\SetKwInOut{Global}{\textsc{global variables}}
	\SetKwInOut{Output}{\textsc{output}}
	\SetKwProg{myproc}{\textsc{procedure}}{}{}
	\BlankLine
	\For{$i=1, \dots,  i^*$}{
		Let $\pi_y (\sigma_{x = v_{i}}(R)) = \{u_1, u_2, \cdots, u_\ell\}$;\;
		output $(v_i, f^{-1}(${\sc ListMerge}$(\pi_z \sigma_{y = u_1} S,\pi_z \sigma_{y = u_2} S, \cdots, \pi_z \sigma_{y = u_\ell} S)))$\footnotemark 
		}	
	\begin{tikzpicture}[remember picture, overlay]
	\draw [black, thick] (12,0.4) to [square right brace] (12,1.7);
	\node at (13.5, 1.1) {\tiny run for $O(1)$ time};
	\node at (13.5, 0.9) {\tiny then switch};
	\draw [black, thick] (12,0.35) to [square left brace] (12,-1.);
	\node at (13.5, -0.4) {\tiny run for $O(1)$ time};
	\node at (13.5, -0.6) {\tiny then switch};	
	\end{tikzpicture}
	\For{$i=i^*+1, \dots,  n$}{
		Let $\pi_y (\sigma_{x = v_{i}}(R)) = \{u_1, u_2, \cdots, u_\ell\}$;\;
		output $(v_i, f^{-1}(${\sc ListMerge}$(\pi_z \sigma_{y = u_1} S,\pi_z \sigma_{y = u_2} S, \cdots, \pi_z \sigma_{y = u_\ell} S)))$ 
		}	
	\caption{{\sc EnumTwoPath}} \label{algo:enumeration:twopath}
\end{algorithm}\footnotetext{Abusing the notation, $f^{-1}(\mB)$ for some ordered list (or tuple) $\mB$ returns an ordered list (tuple) $\mB'$ where $\mB'(i) = f^{-1}(\mB(i))$.}    
 The algorithm alternates between low-degree and high-degree values in $\mL$. The main idea is that, for a given $v_i \in \mL$, we can enumerate the result of the subquery $\sigma_{x=v_i}(Q_{\texttt{\upshape two-path}})$ with delay $O(d_i)$. This can be accomplished by Algorithm~\ref{algo:merging}, since the subquery is equivalent to list merging.
 
 \begin{exa} \label{ex:enum:two}
 	\begin{figure*}
 		\centering
 		\begin{subfigure}{0.13\linewidth}
 			\vspace{12pt}
 			\begin{tabular}{ !{\vrule width1pt} c| c|c|c  !{\vrule width1pt} } 
 				\Xhline{1pt}
 				$x$ & $y$ \\ 
 				\Xhline{1pt}
 				$a_1$ & $b_1$ \\ 
 				$a_1$ & $b_2$ \\
 				$a_1$ & $b_3$ \\
 				$a_2$ & $b_1$ \\
 				\Xhline{1pt}
 			\end{tabular}		
 			
 			\caption{Table $R$} \label{subfig:RelationR}	
 		\end{subfigure}
 		\begin{subfigure}{0.13\linewidth}
 			\vspace{12pt}
 			\begin{tabular}{ !{\vrule width1pt} c| c|c|c !{\vrule width1pt} } 
 				\Xhline{1pt}
 				$y$ & $z$ \\ 
 				\Xhline{1pt}
 				$b_1$ & $c_1$ \\ 
 				$b_1$ & $c_2$ \\
 				$b_2$ & $c_2$ \\
 				$b_3$ & $c_3$ \\
 				\Xhline{1pt}
 			\end{tabular}		
 			\caption{Table $S$} \label{subfig:RelationS}	
 		\end{subfigure}
 		\begin{subfigure}{0.28\linewidth}
 			\begin{tikzpicture}
 			\def\y{0}	
 			\node (a2) at (0,-1+\y) {\color{blue}$a_2$};
 			\node (b1) at (1,-1+\y)  {$b_1$};
 			
 			\draw[->, color=black] (a2.east) to[out=0,in=-180] (b1.west);
 			
 			\node (a1) at (0,-2+\y) {\color{red}$a_1$};
 			\node (b2) at (1,-2+\y)  {$b_2$};
 			\node (b3) at (1,-3+\y)  {$b_3$};
 			
 			\draw[->, color=black] (a1.east) to[out=0,in=-180] (b1.west);
 			\draw[->, color=black] (a1.east) to[out=0,in=-180] (b2.west);
 			\draw[->, color=black] (a1.east) to[out=0,in=-180] (b3.west);
 			
 			\node (c1) at (2.25,-1+\y) {$[c_1, c_2]$};
 			\node  at (3.25,-1+\y) {$S[b_1]$};
 			\node (c2) at (2,-2+\y) {$[c_2]$};
 			\node  at (3.25,-2+\y) {$S[b_2]$};
 			\node (c3) at (2,-3+\y) {$[c_3]$};
 			\node  at (3.25,-3+\y) {$S[b_3]$};
 			
 			\node at (1.9,-0.7+\y) {\color{blue}$\downarrow$};
 			\node at (2.1,-0.7+\y) {\color{red}$\downarrow$};
 			\node at (2,-1.7+\y) {\color{red}$\downarrow$};
 			\node at (2,-2.7+\y) {\color{red}$\downarrow$};
 			
 			\draw[->, color=black] (b1.east) to[out=0,in=-180] (c1.west);
 			\draw[->, color=black] (b2.east) to[out=0,in=-180] (c2.west);
 			\draw[->, color=black] (b3.east) to[out=0,in=-180] (c3.west);
 			
 			\end{tikzpicture}
 			\caption{output $(a_1, c_1)$} \label{subfig:preprocessing}
 		\end{subfigure}
 		\begin{subfigure}{0.20\linewidth}
 			\begin{tikzpicture}
 			\def\y{0}
 			\node (b1) at (1,-1+\y)  {$b_1$};
 			
 			\node (a1) at (0,-2+\y) {\color{red}$a_1$};
 			\node (b2) at (1,-2+\y)  {$b_2$};
 			\node (b3) at (1,-3+\y)  {$b_3$};
 			
 			\draw[->, color=black] (a1.east) to[out=0,in=-180] (b1.west);
 			\draw[->, color=black] (a1.east) to[out=0,in=-180] (b2.west);
 			\draw[->, color=black] (a1.east) to[out=0,in=-180] (b3.west);
 			
 			\node (c1) at (2.25,-1+\y) {$[c_1, c_2]$};
 			\node (c2) at (2,-2+\y) {$[c_2]$};
 			\node (c3) at (2,-3+\y) {$[c_3]$};
 			\node at (2.5,-0.7+\y) {\color{red}$\downarrow$};
 			\node at (2,-1.7+\y) {\color{red}$\downarrow$};
 			\node at (2,-2.7+\y) {\color{red}$\downarrow$};
 			
 			\draw[->, color=black] (b1.east) to[out=0,in=-180] (c1.west);
 			\draw[->, color=black] (b2.east) to[out=0,in=-180] (c2.west);
 			\draw[->, color=black] (b3.east) to[out=0,in=-180] (c3.west);
 			
 			\end{tikzpicture}
 			\caption{output $(a_1, c_2)$} \label{subfig:step1}
 		\end{subfigure}
 		\begin{subfigure}{0.20\linewidth}
 			\begin{tikzpicture}
 			\def\y{0}
 			\node (b1) at (1,-1+\y)  {$b_1$};
 			
 			\node (a1) at (0,-2+\y) {\color{red}$a_1$};
 			\node (b2) at (1,-2+\y)  {$b_2$};
 			\node (b3) at (1,-3+\y)  {$b_3$};
 			
 			\draw[->, color=black] (a1.east) to[out=0,in=-180] (b1.west);
 			\draw[->, color=black] (a1.east) to[out=0,in=-180] (b2.west);
 			\draw[->, color=black] (a1.east) to[out=0,in=-180] (b3.west);
 			
 			\node (c1) at (2.25,-1+\y) {$[c_1, c_2]$};
 			\node (c2) at (2,-2+\y) {$[c_2]$};
 			\node (c3) at (2,-3+\y) {$[c_3]$};
 			\node at (2.8,-0.7+\y) {\color{red}$\downarrow$};
 			\node at (2.3,-1.7+\y) {\color{red}$\downarrow$};
 			\node at (2,-2.7+\y) {\color{red}$\downarrow$};
 			
 			\draw[->, color=black] (b1.east) to[out=0,in=-180] (c1.west);
 			\draw[->, color=black] (b2.east) to[out=0,in=-180] (c2.west);
 			\draw[->, color=black] (b3.east) to[out=0,in=-180] (c3.west);
 			
 			\end{tikzpicture}
 			\caption{output$(a_1, c_3)$} \label{subfig:step2}
 		\end{subfigure}
 		\caption{Example for two path query enumeration}
 	\end{figure*}
 	
 	Consider relations $R$ and $S$ as shown in~\autoref{subfig:RelationR} and \autoref{subfig:RelationS}.~\autoref{subfig:preprocessing} shows the sorted valuations $a_2$ and $a_1$ by their degree and the valuations for $Z$ as sorted lists $S[b_1], S[b_2]$ and $S[b_3]$. For both $a_1$ and $a_2$, the pointers point to the head of the lists. We will now show how {\sc ListMerge}$(S[b_1], S[b_2], S[b_3])$ is executed for $a_1$. Since three sorted lists need to be merged, the algorithm finds the smallest valuation across the three lists. $c_1$ is the smallest valuation and the algorithm outputs $(a_1, c_1)$. Then, we need to increment pointers of all lists pointing to $c_1$ ($S[b_1]$ is the only list containing $c_1$).~\autoref{subfig:step1} shows the state of pointers after this step. The pointer for $S[b_1]$ points to $c_2$, and all other pointers still point to the list head. Next, we merge the list by finding each list's smallest valuation. Both $S[b_1]$ and $S[b_2]$ pointers are pointing to $c_2$ and the algorithm outputs $(a_1, c_2)$. The pointers for both $S[b_1]$ and $S[b_2]$ are incremented, and the enumeration for both lists is complete as shown in~\autoref{subfig:step2}. In the last step, only $S[b_3]$ list remains and we output $(a_1, c_3)$ and increment the pointer for $S[b_3]$. All pointers are now past the end of the lists 
 	and the enumeration is now complete.

 \end{exa}

\begin{thm} \label{lem:twopath:delay:1}
	For $Q_{\texttt{\upshape two-path}}$ and any instance $D$, there is an algorithm with preprocessing time $T_p = O(|D|)$ and preprocessing space $S_p = O(|D|)$, such that the query result $Q_{\texttt{\upshape two-path}}(D)$ can be enumerated with delay $\delta = O(|D|^2/|\tOUT_{\Join}|)$ and space $S_e = O(|D|)$.
\end{thm}
\begin{proof}
	{To prove this result, we will apply Lemma~\ref{lem:helper:one}, where $\mA'$ is the first loop (the one with light-degree values), and $\mA$ is the second loop (the one with high-degree values). 
		
		Let $\delta$ denote the degree of the valuation $v_{i^*}$. First, we claim that the delay of $\mA'$ will be $O(\delta)$. Indeed, {\sc ListMerge} will output a result every $O(\delta)$ time since the degree of each valuation in the first loop is at most $\delta$. Let $J_h = \sum_{i > i^*} |R(v_i,y) {\Join} S(y,z)|$ and $J_\ell = \sum_{i \leq i^*} |R(v_i,y) {\Join} S(y,z)|$. Then, $\mA'$ runs in time at least $J_\ell$, and $\mA$ in time at most $c^\star \cdot J_h$. Here, $c^\star$ is an upper bound on the number operations in each loop iteration in Algorithm~\ref{algo:enumeration:twopath}. Since by construction $J_\ell \geq J_h$, Lemma~\ref{lem:helper:one} obtains a total delay of $O(\delta)$.
		
		It now remains to bound $\delta$. First, as $i^*$ is the smallest index that satisfies~\autoref{cond}, it must be that  $J_\ell - J_h \leq |D|$ (if not, shifting the smallest index by one decreases the LHS by at most $|D|$ and increases the RHS by at most $|D|$ while still satisfying the condition that $J_\ell \geq J_h$). Combined with the fact that $J_\ell + J_h = |\tOUT_{\Join}|$, we get that $J_h \geq |\tOUT_{\Join}|/2 -  |D|/2  \geq 1/4 \cdot |\tOUT_{\Join}|$ assuming $|\tOUT_{\Join}| \geq 2 \cdot |D|$. Finally, $J_h \leq |D|^2/\delta$ since there are at most $|D|/\delta$ heavy values, and each heavy value can join with at most $|D|$ tuples for the full join. Combining the two inequalities gives us the claimed guarantee.}
\end{proof}

The reader should note that the delay of $\delta = O(|D|^2/|\tOUT_{\Join}|)$ is only an upper bound. Depending on the skew in the database instance, Algorithm~\ref{algo:enumeration:twopath} may achieve much better delay guarantees in practice, as shown in Example~\ref{ex:two} below.

\begin{exa} \label{ex:two}
Consider relation $R(x,y)$ of size $O(N)$, with values $v_1, \dots, v_N$ in $x$. Suppose that each of $v_1, \dots, v_{N-1}$ has a degree exactly 1, and each is connected to a distinct value in $y$. Also, $v_N$ has degree $N-1$ and is connected to all $N-1$ values in $y$. Let relation $S(z,y) = \{t \in R (x,y)\}$ (i.e., relation $S$ is a copy of relation $R$). Suppose we want to compute $Q_{\texttt{\upshape two-path}}$. It is easy to see that $\tOUT_{\Join}=\Theta(N)$. Thus, applying the bound of $\delta = O(N^2/|\tOUT_{\Join}|)$ gives us $O(N)$ delay. However, Algorithm~\ref{algo:enumeration:twopath} achieves a constant delay guarantee since all of $v_1, \dots, v_{N-1}$ are processed by the \textsf{left} pointer in $O(1)$ delay as they produce exactly one output result, while the \textsf{right} pointer processes $v_N$ on the fly in $O(N)$ time.
\end{exa}

\subsection{Proof of Theorem ~\ref{thm:star:delay}} 
\label{sec:main}

We are ready to prove Theorem~\ref{thm:star:delay} formally.
We now generalize Algorithm~\ref{algo:enumeration:twopath} for any star query.  At a high level, we will decompose the join query $\qstar{k}$ into a union of $k+1$ subqueries whose output is a partition of the result of the original query. These subqueries will be generated based on whether a value for some $x_i$ is \emph{light} or not. We will show that if any of the values for $x_i$ are light, the enumeration delay is small. The $(k+1)$-th subquery will contain heavy values for all attributes. Our key idea again is to {interleave} the join computation of the {\em heavy} subquery with the remaining light subqueries.

\introparagraph{Preprocessing Phase} Suppose the dangling tuples have been removed from all input relation, which can be achieved in linear time~\cite{yannakakis1981algorithms}. The full join size $|\tOUT_{\Join}|$ can also be computed in linear time. {Similar to the preprocessing phase in the previous section, we construct the hash tables $f, f^{-1}$ to perform the domain compression and modify all the input relations by replacing tuple $t$ with $f(t)$.} Set $\Delta = ({2 \cdot |D|^k}/{|\tOUT_{\Join}|})^{\frac{1}{k-1}}$. For each relation $R_i$, a value $v$ for attribute $x_i$ is {\em heavy} if its degree (i.e $|\pi_y \sigma_{x_i = v} R(x_i, y)|$) is greater than $\Delta$, and {\em light} otherwise. Moreover, a tuple $t \in R_i$ is identified as heavy or light depending on whether $\pi_{x_i}(t)$ is heavy or light. In this way, each relation $R$ is divided into two relations $R^h$ and $R^\ell$, containing heavy and light tuples, respectively, in time $O(|D|)$. The original query can be decomposed into subqueries of the following form:
$\pi_{x_1, x_2, \cdots, x_k} (R^?_1 \Join R^?_2 \Join \cdots \Join R^?_k)$, 
where $?$ can be either $h, \ell$ or $\star$. Here, $R^\star_i$ denotes the original relation $R_i$. However, care must be taken to generate the subqueries so that there is no overlap between the output of any subquery. In order to do so, we create $k$ subqueries of the form $$Q_i = \pi_{x_1, \dots, x_k} (R^h_1 \Join \cdots \Join R^h_{i-1} \Join R^\ell_i \Join R^\star_{i+1} \Join \cdots \Join R^\star_k)$$ 
In $Q_i$, relation $R_i$ has superscript $\ell$, all relations $R_1, \dots, R_{i-1}$ have superscript $h$ and relations $R_{i+1}, \dots, R_k$ have superscript $\star$. The $(k+1)$-th query with all $?$ as $h$ is denoted by $Q_H$. {Note that each output tuple $t$ is generated by exactly one of the $Q_i$; thus, the output of all subqueries is disjoint. This implies that each $f^{-1}(t)$ is also generated by exactly one subquery.} Similar to the preprocessing phase of the two path query, we store all $R^\ell_i$ and $R^h_i$ in hashmaps where the values in the maps are lists sorted in lexicographic order. 

\introparagraph{Enumeration Phase} We next describe how enumeration is performed. We will show that for $Q_L = Q_1 \cup \dots \cup Q_{k}$, the result can be enumerated with delay $O(\Delta)$.
As $Q_H$ contains all heavy valuations from all relations, we compute its join \emph{on-the-fly} by alternating between some subquery in $Q_L$ and $Q_H$. This will ensure that we can give some output to the user with delay guarantees and also make progress on computing the full join of $Q_H$. {Our goal is to reason about the running time of enumerating $Q_L$ (denoted by $T_L$) and the running time of $Q_H$ (denoted by $T_H$) and make sure that while we compute $Q_H$, we do not run out of the output from $Q_L$}. Next, we introduce the algorithm that enumerates output for any specific valuation $v$ of attribute $x_i$, described in Lemma~\ref{lem:star:degree}. This algorithm can be viewed as another instantiation of Algorithm~\ref{algo:merging}.

\begin{lem} \label{lem:star:degree}
	For a star query $\qstar{k}$, any instance  $D$, and an arbitrary value $v_i \in \domain(x_i)$ with degree $d$ in $R_i(x_i, y)$, its query result $\sigma_{x_i = v_i}\qstar{k}(D)$ can be enumerated with 
    $O(d)$ delay.
\end{lem}
\begin{proof}
	Consider some tuple $(v_i, u) \in R_i$. Each $u$ is associated with a list of valuations over attributes $(x_1, \cdots, x_{i-1}, x_{i+1}, \cdots, x_k)$, which is a Cartesian product of $k-1$ sub-lists $\sigma_{y = u} R_j(x_j, y)$. Note that such a list is not materialized as that for a two-path query but is present in a factorized form. We next define the enumeration algorithm $\mA_u$ for each $u \in \pi_y \sigma_{x_i = v_i} R_i(x_i,y)$, with lexicographical ordering of attributes $(x_1, \cdots, x_{i-1}, x_{i+1}, \cdots, x_k)$. Note that elements in each list $\pi_{x_j} \sigma_{y=u} R^?_j(x_j, y)$ can be enumerated with $O(1)$ delay. Then, $\mA_u$ enumerates all results in  $\times_{j \neq i: j \in \{1,2\cdots,k\}} \sigma_{y = u} R^?_j(x_j, y)$ by $k-1$ level of nested loops in lexicographic order $(x_1, \cdots, x_{i-1}, x_{i+1}, \cdots, x_k)$, which has $O(k-1) = O(1)$ delay. After applying Algorithm~\ref{algo:merging}, we can obtain an enumeration algorithm that enumerates the union of query results over all neighbors with $O(d)$ delay guarantee. {For each output tuple $t$ generated by Algorithm~\ref{algo:merging}, we return $f^{-1}(t)$ to the user. }
\end{proof}

\changes{Let $c^\star$ be an upper bound on the number of operations in each iteration of {\sc ListMerge}. This can be calculated by counting the number of operations in the exact implementation of the algorithm.
Directly implied by Lemma~\ref{lem:star:degree}, the result of any subquery in $Q_L$ can be enumerated with delay $ O(\Delta)$. Let $Q_H^*$ denote the corresponding full query of $Q_H$, i.e., the head of $Q_H^*$ also includes the variable $y$ ($Q_L^*$ is defined similarly). Then, $Q_H^*$ can be evaluated in time  $T_H \leq c^\star \cdot |Q_H^*| \leq c^\star \cdot |\tOUT_{\Join}|/2$ by using {\sc ListMerge} on subquery $Q_H$. This follows from the bound $|Q_H^*| \leq |D| \cdot (|D|/\Delta)^{k-1}$ and our choice of $\Delta = ({2 \cdot |D|^k}/{|\tOUT_{\Join}|})^{\frac{1}{k-1}}$. Since $|Q_H^*| + |Q_L^*| = |\tOUT_{\Join}|$, it holds that $|Q_L^*| \geq |\tOUT_{\Join}|/2$ given our choice of $\Delta$. Also, the running time $T_L$ is lower bounded by $|Q_L^*|$ (since we need at least one operation for every result). Thus, $T_L \geq |\tOUT_{\Join}|/2$. We are now ready to apply Lemma~\ref{lem:helper:one} with the following parameters: 
\begin{enumerate}
	\item $\mA$ is the full join computation of $Q_H$ and $T  = c^\star \cdot |\tOUT_{\Join}|/2$.
	\item $\mA'$ is the enumeration algorithm applied to $Q_L$ with delay guarantee $\delta = O(\Delta)$ and $T' = |\tOUT_{\Join}|/2$.	
	\item $T$ and $T'$ are fixed once $|\tOUT_{\Join}|, \Delta$, and the constant $c^\star$ are known.
\end{enumerate}

 By construction, the outputs of $Q_H$ and $Q_L$ are also disjoint. Thus, the conditions of Lemma~\ref{lem:helper:one} apply, and we obtain a delay of $O(\Delta)$. \hfill$\square$}

\subsection{Interleaving with Join Computation} \label{subsec:interleaving}
 Theorem~\ref{thm:star:delay} obtains poor delay guarantees when the full join size $|\tOUT_{\Join}|$ is close to input size $|D|$. In this section, we present an alternate algorithm that provides good delay guarantees.  The algorithm is an instantiation of Lemma~\ref{lem:helper:two} on the star query, which degenerates to computing as many distinct output results as possible in limited preprocessing time. An observation is that for each valuation $u$ of attribute $y$, the Cartesian product $\times_{i \in \{1,2,\cdots,k\}} \pi_{x_i} \sigma_{y = u} R_i(x_i, y)$ is a subset of output results without duplication. Thus, this subset of the output result is readily available since no deduplication needs to be performed. Similarly, after all relations are reduced, it is also guaranteed that each valuation of attribute $x_i$ of relation $R_i$ generates at least one output result. Thus, $\max_{i=1}^k |\domain(x_i)|$ results are also readily available that do not require deduplication. We define $J$ as the larger of the two quantities, i.e,  $J = \max \left\{\max_{i=1}^k |\domain(x_i)|,  \max_{u \in \domain(y)} \prod_{i =1}^k |\sigma_{y = u} R_i(x_i, y)| \right\}$. Together with these observations, we can achieve the following theorem.


\begin{thm} \label{thm:star:delay:alternate}
	For a star query $\qstar{k}$ and any instance $D$, there exists an algorithm with preprocessing time $O(|D|)$ and space $O(|D|)$, such that the query result $\qstar{k}(D)$ can be enumerated with
	$
	\text{delay } \delta = O\bigg({|\tOUT_{\Join}|}/{|\tOUT_{\pi}|^{1/k}}\bigg) \text{and space } S_e = O(|D|)
	$.
\end{thm}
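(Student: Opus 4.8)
The goal of Theorem~\ref{thm:star:delay:alternate} is to handle the regime where $|\tOUT_{\Join}|$ is small (close to $|D|$), a regime where Theorem~\ref{thm:star:delay} degrades to linear delay. The natural tool is Lemma~\ref{lem:helper:two}, which trades off between a stockpile of $J$ pre-stored, duplicate-free output tuples and an on-the-fly algorithm $\mA$ of known total running time $T$, yielding delay $O(T/J)$. So the plan is to (i) identify a valid algorithm $\mA$ computing the full projected output with a known time bound $T$, (ii) lower-bound the number $J$ of output tuples we can materialize in linear preprocessing time \emph{without any deduplication}, and (iii) plug both into Lemma~\ref{lem:helper:two} and simplify $O(T/J)$ to the claimed $O(|\tOUT_{\Join}|/|\tOUT_{\pi}|^{1/k})$.

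\textbf{Choosing $T$ and $J$.} For $\mA$, I would take the full-join computation of $\qstar{k}$ followed by deduplicated projection, which can be driven by {\sc ListMerge} exactly as in the proof of Theorem~\ref{thm:star:delay}; its total running time is $T = O(|\tOUT_{\Join}|)$ (up to the constant $c^\star$ bounding the per-iteration cost). For $J$, the excerpt already supplies the two families of duplicate-free tuples: for each join value $u$ the Cartesian product $\times_{i} \pi_{x_i}\sigma_{y=u}R_i$ contains distinct output tuples, giving $\max_{u}\prod_{i=1}^k |\sigma_{y=u}R_i|$ of them; and since relations are reduced, each value of any single $x_i$ yields at least one output tuple, giving $\max_i |\domain(x_i)|$ of them. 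Setting $J$ to the max of these two quantities, both can be produced and stored in $O(|D|)$ preprocessing time with no deduplication. Lemma~\ref{lem:helper:two} then gives delay $O(T/J) = O(|\tOUT_{\Join}|/J)$.

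\textbf{The key inequality.} The crux is to show $J \ge |\tOUT_{\pi}|^{1/k}$, which converts $O(|\tOUT_{\Join}|/J)$ into the stated bound. Here $|\tOUT_{\pi}|$ is the size of the projected result, i.e.\ the number of distinct tuples over $(x_1,\dots,x_k)$. Every output tuple arises from some join value $u$, so $|\tOUT_{\pi}| \le \sum_{u} \prod_{i=1}^k |\sigma_{y=u}R_i|$. I would argue that either a single value $u$ contributes a large product --- whence $\max_u \prod_i |\sigma_{y=u}R_i|$ is large --- or the contributions are spread across many values of $u$, in which case the domains of the $x_i$ must be large, making $\max_i |\domain(x_i)|$ large. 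Concretely, since each distinct output tuple is a $k$-tuple of $x_i$-values, we have $|\tOUT_{\pi}| \le \prod_{i=1}^k |\domain(x_i)| \le (\max_i |\domain(x_i)|)^k$, which already yields $\max_i |\domain(x_i)| \ge |\tOUT_{\pi}|^{1/k}$ and hence $J \ge |\tOUT_{\pi}|^{1/k}$ directly. This is the cleanest route and I expect it to be the main substantive step; the remaining arithmetic is a one-line substitution into Lemma~\ref{lem:helper:two}.

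\textbf{Finishing.} With $J \ge |\tOUT_{\pi}|^{1/k}$ and $T = O(|\tOUT_{\Join}|)$, Lemma~\ref{lem:helper:two} delivers delay $\delta = O(|\tOUT_{\Join}|/|\tOUT_{\pi}|^{1/k})$, matching the statement. Preprocessing is $O(|D|)$ (reducing relations, computing $|\tOUT_{\Join}|$, building the hash maps, and materializing the $J$ duplicate-free tuples), and the enumeration space $S_e = O(|D|)$ accounts for the stored tuples and the deduplication hash set $H$ used inside Lemma~\ref{lem:helper:two}. The one subtlety I would double-check is that $J \le T$, the hypothesis required by Lemma~\ref{lem:helper:two}: since $J$ counts distinct output tuples it is at most $|\tOUT_{\pi}| \le |\tOUT_{\Join}| = O(T)$, so the hypothesis holds and the argument closes.
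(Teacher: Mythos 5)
Your proposal is correct and follows essentially the same route as the paper's proof: the paper likewise instantiates Lemma~\ref{lem:helper:two} with the full-join computation as $\mA$ (so $T = O(|\tOUT_{\Join}|)$), takes $J = \max\left\{\max_{i} |\domain(x_i)|,\; \max_{u} \prod_{i=1}^k |\sigma_{y=u} R_i(x_i,y)|\right\}$, and closes with exactly your inequality $|\tOUT_{\pi}| \le \prod_{i} |\domain(x_i)|$, hence $J \ge |\tOUT_{\pi}|^{1/k}$. The only phrase to tighten is your claim that the stockpile can be ``produced and stored'' in $O(|D|)$ time: the Cartesian-product family can have far more than $|D|$ tuples, so the paper keeps it in factorized form (just the lists $\pi_{x_i}\sigma_{y=u} R_i$), which still supports $O(1)$ membership probes and $O(1)$-delay emission, so Lemma~\ref{lem:helper:two} applies unchanged.
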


In the above theorem, we obtain delay guarantees that depend on both the size of the full join result $\tOUT_{\Join}$ and the projection output size $\tOUT_{\pi}$. However, one does not need to know $|\tOUT_{\Join}|$ or $|\tOUT_{\pi}|$ to apply the result. We first compare the result with Theorem~\ref{thm:star:delay}. First, observe that both Theorem~\ref{thm:star:delay:alternate} and Theorem~\ref{thm:star:delay} require $O(|D|)$ preprocessing time. 
{Second, the delay guarantee provided by Theorem~\ref{thm:star:delay:alternate} can be better than Theorem~\ref{thm:star:delay}. This happens when $|\tOUT_{\Join}| \leq |D| \cdot J^{1-1/k}$, a condition that can be easily checked in linear time}.

We now proceed to describe the algorithm. First, we compute all the statistics for computing $J$ in linear time. If $J = |\domain(x_j)|$ for some integer $j \in \{1,2,\cdots,k\}$, we just materialize one result for each valuation of $x_j$. Otherwise, $J = \prod_{i =1}^k |\sigma_{y = u} R_i(x_i, y)|$ for some valuation $u$ in attribute $y$. Note that we do not need to materialize the Cartesian product explicitly but only need to store the tuples $\sigma_{y = u} R_i(x_i, y)$ for all $i \in \{1, \dots, k\}$. As mentioned before, each output in $\times_{i =1}^k \left(\pi_{x_i} \sigma_{y = u}R_i(x_i, y)\right)$ can be enumerated with $O(1)$ delay and clearly does not have any duplicate output.  This preprocessing phase takes $O(|D|)$ time and $O(|D|)$ space. We can now invoke Lemma~\ref{lem:helper:two} to achieve the claimed delay. The final observation is to express $J$ in terms of $|\tOUT_{\pi}|$. Note that $|\tOUT_{\pi}| \leq \Pi_{i \in [k]} |\domain(x_i)|$ which implies that $ \max_{i \in [k]} |\domain(x_i)| \geq |\tOUT_{\pi}|^{1/k}$. Thus, it holds that $J \geq |\tOUT_{\pi}|^{1/k}$ which gives us the desired bound on the delay guarantee.

\subsection{Dynamic setting}
{Remarkably, it is also possible to do a simple adaptation of the algorithm for the dynamic setting but only in the case of self-joins, where $R_1 = R_2 = \dots R_k = R$. In this setting, single tuple updates are allowed to the underlying relation, i.e., a tuple $t$ over variables $x, y$ can be either inserted or deleted from the relation $R(x, y)$ (note that a relation does not allow duplicates). 

\revb{We will store the relation $R(x,y)$ in a bidirectional hash table (i.e., given a $u$, we can retrieve $\pi_y \sigma_{x = u} R$ and $|\pi_y \sigma_{x = u} R|$, and given a $v$, we can retrieve $\pi_x \sigma_{y = v} R$ and $|\pi_x \sigma_{y = v} R|$). We will also use a max heap structure that admits constant amortized time insertion and updates (i.e., changing the weight of a key in the heap), and amortized $O(\log N)$ time to remove the element at the top of the heap wiht $N$ elements. Fibonacci heap~\cite{fredman1987fibonacci} is an example of such a data structure.

\begin{algorithm}[!htp]
	\SetCommentSty{textsf}
	\DontPrintSemicolon 
	\SetKwFunction{preprocess}{\textsc{Preprocess}}
        \SetKwFunction{update}{\textsc{Update}}
        \SetKwFunction{enumerate}{\textsc{Enumerate}}
        \SetKwFunction{enumeratel}{\textsc{EnumerateFixed$w_k$}}
	\SetKwInOut{Inp}{\textsc{input}}
	\SetKwProg{myproc}{\textsc{procedure}}{}{}
	\BlankLine
        \revb{
        $\mathsf{fulljoinsize}\leftarrow 0$, $J \leftarrow \emptyset$,
        $\mathsf{heap} \leftarrow \emptyset$; \tcc*{empty hash set and heap}
        \myproc{\preprocess{}}{
        \ForEach{$t \in  R$}{
            add $(\pi_{x} t, \pi_{x} t)$ to $J$ \;
            $\mathsf{fulljoinsize}\leftarrow \mathsf{fulljoinsize}+ |\sigma_{y = \pi_{y} t} R|^k $; \;
            add key $\pi_y t$ with weight $|\sigma_{y = \pi_{y} t} R|^k$ to $\mathsf{heap}$ \label{line:heapinsert} 
        }
        }
        \myproc{\update{$t, \mathsf{op}$}}{
            $\mathsf{oldcontribution} \leftarrow |\sigma_{y = \pi_{y} t} R|^k$ \label{line:temp}\;
            $\mathsf{newcontribution} \gets 0$ \;
            \If{$\mathsf{op} = \mathsf{insert}$}{
                insert $t$ in $R$; \tcc*{insert in bidirectional hash tables}
                add $(\pi_{x} t, \pi_{x} t)$ to $J$ \;
                $\mathsf{newcontribution} \gets |\sigma_{y = \pi_{y} t} R|^k$
            }
            \If{$\mathsf{op} = \mathsf{delete}$}{
                delete $t$ from $R$ \tcc*{delete from bidirectional hash tables}
                \lIf{$|\sigma_{x = \pi_x t} R| = 0$}{
                    delete $(\pi_x t, \pi_x t)$ from $J$}
            }
            $\mathsf{fulljoinsize}\leftarrow \mathsf{fulljoinsize}+ \mathsf{newcontribution} - \mathsf{oldcontribution}$; \label{line:new} \;
            update key $\pi_y t$ to weight $|\sigma_{y = \pi_{y} t} R|^k$ in $\mathsf{heap}$ \label{line:heapupdate} \;
            Rebuild $\mathsf{heap}$ by removing all keys with weight $0$ after every $O(|R|)$ updates 
            \label{line:rebuild}
        }
        \myproc{\enumerate{}}{
                $s \leftarrow $ key at the top of the $\mathsf{heap}$ \;
                Let $J'$ be the set with larger cardinality out of $J$ or $\times_{i \in [k]} \pi_{x_i} \sigma_{y = s} R_i$ \label{line:large} \;
                Call Lemma~\ref{lem:helper:two} with stored hash set $J'$, $\mA$ as any worst-case optimal join algorithm on the input query, and $T = c^\star \cdot \max\{\mathsf{fulljoinsize}, |\times_{i \in [k]} \pi_{x_i} \sigma_{y = s} R_i| \}$ \tcc*{$c^\star$ is the constant in $O(\cdot)$ of the worst-case optimal join algorithm}
        }}
	\caption{{\sc DynamicStarQuerySelfJoin}$(\qstar{k}, R)$} \label{algo:dynamic:star}
\end{algorithm}

Algorithm~\ref{algo:dynamic:star} shows the procedures for preprocessing, update, and enumeration. In the preprocessing phase, we populate the set $J$, which stores precomputed results that will be used for interleaving, and compute the full join size and store it in the \textsf{joinsize} variable. The hash maps storing the relation can be maintained in constant time under single-tuple updates for an update made to the relation. The set $J$ is also maintained by adding a new tuple whenever a new domain value for $x$ is seen and removing a tuple $(a,a)$ from $J$ whenever $|\sigma_{x=a}R|$ becomes zero. 

For the enumeration phase, we apply Lemma~\ref{lem:helper:two} with stored hash set $J$, \reva{since the hash set is a data structure that allows constant-delay enumeration of its contents and does not contain any duplicates}. We can now state the result formally.

\begin{thm} \label{thm:star:delay:alternate:update}
	For a self-join star query $\qstar{k}$ and any single relation $R$, there exists an algorithm with preprocessing time $O(|R|)$ and preprocessing space $O(|R|)$, such that $\qstar{k}(R)$ can be enumerated with
	$
	\text{delay } \delta = O\bigg({|\tOUT_{\Join}|}/{|\tOUT_{\pi}|^{1/k}}\bigg) 
	\text{and enumeration space } S_e = O(|R|)
	$
	with $O(\log |R|)$ amortized update time for single-tuple updates.
\end{thm}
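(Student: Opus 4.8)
The plan is to obtain this as a dynamic instantiation of Lemma~\ref{lem:helper:two}, reusing the analysis of Theorem~\ref{thm:star:delay:alternate} but replacing its statically stored output with an incrementally maintainable one. Two things must be established: (i) the required data structures admit $O(1)$ amortized maintenance under single-tuple updates, and (ii) at any snapshot the enumeration achieves the stated delay. Since the query is fixed, $k$ is a constant and all per-update work is $O(1)$.

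First I would set up the structures. Store $R(x,y)$ as a bidirectional hash map: one map keyed by the $y$-value whose bucket is the sorted list of adjacent $x$-values, and a symmetric map keyed by the $x$-value. Alongside, maintain a hash set $J$ holding the diagonal tuple $(a,\dots,a)$ (with $k$ copies) for every $a$ of positive degree in $R$. Because $R_1=\dots=R_k=R$, each such $a$ witnesses the output tuple $(a,\dots,a)$ by choosing any common join value $u$ with $(a,u)\in R$, so $J$ is a set of $|\domain(x)|$ distinct, duplicate-free output tuples. Building these structures is a single linear pass, giving $T_p=O(|D|)$ and $S_p=O(|D|)$.

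Next I would argue the update bound. An insertion or deletion of $(a,b)$ touches only the two buckets indexed by $a$ and by $b$, which is amortized $O(1)$ for a dynamic hash table. The set $J$ changes only when a degree crosses zero: an insertion that first makes $a$ appear in $\pi_x(R)$ adds $(a,\dots,a)$, and a deletion that drops the degree of $a$ to zero removes it. The current degree is read off the hash map in $O(1)$, so $J$ is maintained in amortized $O(1)$ as well. For the enumeration at a snapshot, let $\mA$ be the full-join evaluation of $\qstar{k}$ (equivalently, {\sc ListMerge} over the factorized neighborhoods, as in Lemma~\ref{lem:star:degree}), running in total time $T=O(|\tOUT_{\Join}|)$ with no delay guarantee. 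The stored set supplies $|J|=|\domain(x)|$ distinct tuples with $|J|\le|\tOUT_{\Join}|\le T$, since each $a$ of positive degree yields the distinct full-join tuple $(a,\dots,a,u_a)$. Invoking Lemma~\ref{lem:helper:two} gives delay $O(T/|J|)=O(|\tOUT_{\Join}|/|\domain(x)|)$. Finally, because $|\tOUT_{\pi}|\le\prod_{i=1}^{k}|\domain(x_i)|=|\domain(x)|^{k}$ in the self-join case, we have $|\domain(x)|\ge|\tOUT_{\pi}|^{1/k}$, yielding the claimed delay $O(|\tOUT_{\Join}|/|\tOUT_{\pi}|^{1/k})$ with $S_e=O(|D|)$.

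The main obstacle, though mild, is the bookkeeping for $J$ under deletions: one must detect exactly when a value leaves $\pi_x(R)$ without rescanning its former bucket, which is precisely why the degree counter maintained by the bidirectional hash map is essential. Everything else—the choice of stored diagonal, the time bound on $\mA$, and the reduction to $|\tOUT_{\pi}|^{1/k}$—follows verbatim from the static argument behind Theorem~\ref{thm:star:delay:alternate}.
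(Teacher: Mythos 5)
Your proposal is correct and matches the paper's own proof essentially step for step: the same bidirectional hash map for $R$, the same diagonal hash set $J$ maintained when degrees cross zero, the same invocation of Lemma~\ref{lem:helper:two}, and the same final bound via $|\domain(x)| \geq |\tOUT_{\pi}|^{1/k}$. No gaps relative to the paper's argument.
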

\begin{proof}
    It is easy to see that the procedure \textsc{Preprocess} requires $O(|R|)$ time and correctly computes the full join size $\mathsf{fulljoinsize}$ as $\sum_{v \in \domain(y)} |\sigma_{y = v} R|^k$. Each $y$ valuation in the active domain is inserted into the heap, taking constant  time per insertion. Each operation in the procedure \textsc{Update} also requires constant time by our assumption on the insertion and deletion time complexity of the hash table and the hash set. To maintain the full join size correctly, we need to consider the new degree of the $y$ valuation after inserting/deleting $t$ in $R$. Line~\ref{line:new} does exactly that by subtracting the contribution of the $y$ valuation before the update and adding the new contribution of the $y$ valuation to the full join size after the update. Since we also need to update the weight of $\pi_{y} t$ in the heap, line~\ref{line:heapupdate} updates the weight of the key appropriately in constant time. Finally, procedure \textsc{Enumerate} invokes Lemma~\ref{lem:helper:two}. The main idea here is to pick the larger of $J$ or $\left|\times_{i \in [k]} \pi_{x_i} \sigma_{y = s} R_i\right|$ so that the enumeration delay is smaller by larger output that can be interleaved with the query execution. Finding the value $s$ (which is at the top of the heap) is a constant time operation. Applying the same reasoning as Theorem~\ref{thm:star:delay:alternate}, we can get desired delay $\delta = O({|\tOUT_{\Join}|}/{|\tOUT_{\pi}|^{1/k}})$.

    To reason about the amortized update time guarantee, note that on line~\ref{line:rebuild}, we rebuild the \textsf{heap} after every $O(|R|)$ updates operations. The step is important to make sure that the space usage of the heap remains linear in the size of the input. The rebuilding step repeatedly pops all entries from the heap and then creates the heap with only the entries that have non-zero weight. Thus, the step requires $O(|R| \log |R|)$ time but since it is done only once after every $O(|R|)$ update operations, the cost amortized cost is $O(\log|R|)$. Since all other update operations are constant time, the heap rebuilding dominates the overall update guarantee.
\end{proof}

To achieve constant update time, the main result that applies to all hierarchical queries (including self-joins) from~\cite{kara19} requires linear preprocessing time and achieves worst-case linear delay. Since ${|\tOUT_{\Join}|}/{|\tOUT_{\pi}|^{1/k}} \leq |D|$~\cite{amossen2009faster}, our strategy provides an alternate algorithm that achieves the same worst-case linear delay (with $O(\log |D|)$ update time instead of constant update time) but may perform better on certain database instances. 

\paragraph{Optimality for dynamic self-joins} Next, we show that the delay guarantee obtained by the algorithm above is optimal conditioned on the popular Boolean matrix multiplication (BMM) conjecture~\cite{williams2018subcubic}: In the Word RAM model with words of $O(\log n)$ bits, given two $n \times n$ Boolean matrices, there exists no combinatorial algorithm for multiplying the matrices in $O(n^{3 - \epsilon})$ time for any constant $\epsilon > 0$. BMM conjecture is also hard when both input matrices are identical\footnote{Given two $n \times n$ matrices $M_1$ and $M_2$, we can create a matrix \begin{equation*}
        M = \left[
            \begin{array}{c c}
                \mathbf{0} & M_1 \\ 
                M_2 & \mathbf{0}
            \end{array}
        \right]
    \end{equation*} to show that if multiplication of $M$ with itself can be achieved in $O(n^{3 - \epsilon})$ time, then BMM can also be solved in $O(n^{3 - \epsilon})$ time. Here $\mathbf{0}$ is an $n \times n$ matrix with all $0$ entries.}. Although the term combinatorial is not formally defined, it refers to an algorithm that does not use any algebraic techniques (such as the ones used for fast matrix multiplication).

\begin{lem}
    For a self-join star query $\qstar{2}$, any single relation $R$ and parameter $\delta = O({|\tOUT_{\Join}|}/{\sqrt{|\tOUT_{\pi}|}})$, there exists no combinatorial algorithm with preprocessing time $O(|R|)$ and space $O(|R|)$, such that the query result $\qstar{2}(R)$ can be enumerated with delay $O(\delta^{1-\epsilon})$, for any constant $\epsilon > 0$, with $O(\log |R|)$ amortized update time for single-tuple updates, assuming the Boolean matrix multiplication conjecture.
\end{lem}
\begin{proof}
    Consider the input matrix $M$ for the BMM problem where we want to compute the product of $M$ with itself. As shown below, we construct a matrix $M'$ of size $2n \times 2n$.
    \begin{equation*}
        M' = \left[
            \begin{array}{c c}
                \mathbf{0} & M \\ 
                \mathbf{1} & \mathbf{0}
            \end{array}
        \right]
    \end{equation*}

    Here, \textbf{0} and \textbf{1} represent a $n \times n$ matrix with all entries as $0$ and $1$ respectively. We will now create a relation $R(x,y)$ from $M'$ by calling the \textsc{Update} procedure for every non-zero entry $(i,j)$ of $M'$. Since there are $O(n^2)$ entries and every entry is processed in $O(\log n)$ amortized time, the total time required is $O(n^2 \log n)$. Next, observe that due to the \textbf{1} matrix in the lower left half of $M'$, there is a $y$ value (in fact, there are $n$ such $y$ values) that generates $\Theta(n^2)$ amount of output. Also, note that $|\tOUT_{\Join}| = \Theta(n^3)$. Therefore, Algorithm~\ref{algo:dynamic:star} would provide a delay guarantee of $O(n^3/n^2) = O(n)$. Suppose an algorithm could enumerate the result with $O(n^{1-\epsilon})$ delay. The total output of the join can be as large as $O(2n \cdot 2n) = O(n^2)$. The original output of multiplying $M$ with itself can be recovered by only looking at output tuples of the join result $(z,w)$ where both $z \leq n$ and $w \leq n$. Thus, the entire output can be enumerated in total time $O(n^2 \cdot n^{1-\epsilon}) = O(n^{3 - \epsilon})$ time, which contradicts the BMM conjecture.
\end{proof}

\paragraph{Why non-self-join is hard.} At this point, the reader may wonder what breaks down for the non-self-join case. First, note that our lower bound argument directly extends to the non-self-join case as well. However, in Algorithm~\ref{algo:dynamic:star}, we can no longer maintain $J$ under updates in the non-self-join case. Indeed, in the presence of self-joins, each $(a,b) \in R$ guarantees that $(a,a, \dots, a)$ is a part of the output. However, no such claim can be made when the relations are independent. Algorithm~\ref{algo:dynamic:star}  keeps track of the $y$ valuation that generates the largest amount of output by using the heap and use it for interleaving (see line~\ref{line:large} where pick the larger of $J$ and the $y$ valuation that generates the most output), but in that case, we do not get any closed form output dependent expression for the delay guarantee. The observation that self-joins can make the  evaluation of queries easier (but in the static setting) has also been made by prior work~\cite{carmeli2023conjunctive}. Thus, constructing an algorithm for the non-self-join that leads to an output-sensitive guarantee is a non-trivial problem that likely requires the development of new technical machinery.}}

\subsection{Fast Matrix Multiplication}
\label{sec:fmm}

{So far, we have focused on combinatorial algorithms.  We next show how fast matrix multiplication can obtain a tradeoff between preprocessing time and delay better than Theorem~\ref{thm:basic:three} for some delay values.
	
\begin{thm} \label{thm:twopath:mmul}
	For a star query $\qstar{k}$ and any instance $D$, there exists an algorithm with pre-processing time $T_p = O((|D|/\delta)^{\omega+k-2})$ and space $S_p=O((|D|/\delta)^{\omega+k-2})$ such that the query result $\qstar{k}(D)$ can be enumerated with delay $O(\delta)$ and enumeration space $S_e = O(|D|)$, for $1 \leq \delta \leq |D|^{(\omega + k - 3)/(\omega + 2 \cdot k - 3)}$.
\end{thm}
\begin{proof}
	Let $\delta$ be the degree threshold for deciding whether a valuation is heavy or light. We can partition the original query into the subqueries of the form:
	$$\pi_{x_1, \dots, x_k} (R_1(x_i^?, y^?) {\Join} R_2(x_2^?, y^?) {\Join} \dots {\Join} R_k(x_k^?, y^?))$$
	where $?$ can be either $h, \ell$ or $\star$. \revb{We will reuse the definition of heavy and light values for variables $x_i$ and $y$ as defined in Section~\ref{sec:main}. The first observation is that the result of the $k$ subqueries where superscript of  $x_1 \dots x_{i-1}$ is $h$, superscript for $x_i$ is $\ell$, and it is $\star$ for the remaining variables (including $y$), can be enumerated with $O(\delta)$ delay using the \textsc{ListMerge} subroutine for each valuation $v$ of $x^\ell_i$ that is light. Partitioning of each relation $R_i$ into $R_i(x^h_i, y^\star)$ and $R_i(x^\ell_i, y^\star)$, as well as their indexing into a bidirectional hash table, can be done in linear time. These $k$ queries can be directly enumerated in the enumeration phase with the desired delay.
    
    For the remaining queries, we need to perform preprocessing. The first subquery to process is $Q'$ with superscript for all $x_i, i \in [k]$ is $\star$ and it is $\ell$ for $y$. We define a valuation $v$ for variable $y$ to be light if $|\pi_{x_i} \sigma_{y = v} R(x_i, y)| \leq \delta$ for each $i \in \{ 1, \dots, k\}$. Using any worst-case optimal join algorithm, this subquery can be evaluated in time $ O(\sum_{\text{all light valuations }v} |\pi_{x_i} \sigma_{y = v} R(x_i, y)|) \leq O(\delta^{k-1} \cdot \sum_{\text{all light valuations }v} |\pi_{x_i} \sigma_{y = v} R(x_k, y)|) = O(|D| \cdot \delta^{k-1})$.

    The second query we need to preprocess is the subquery $Q''$, where all variables are heavy, i.e., superscript for $x_i, i \in [k]$ and $y$ is $h$. For this subquery, we will use matrix multiplication using the construction outlined in~\cite{deep2020fast}. Specifically, we first create a rectangular matrix $M$ of dimensions $(|D|/\delta)^{\lceil k/2 \rceil} \times |D|/\delta$ and a matrix $N$ of dimensions $|D|/\delta \times (|D|/\delta)^{\lfloor k/2 \rfloor}$ such that,

    \begin{equation*} 
        M_{(a_1, a_2, \dots, a_{\lceil k/2 \rceil}), b} = \begin{cases}
            1, &(a_1,b) \in R_1, \dots, (a_{\lceil k/2 \rceil}, b) \in R_{\lceil k/2 \rceil} \\
            0, &\text{otherwise}
        \end{cases}
    \end{equation*}

    \begin{equation*} 
        \hspace{2.4em}N_{b,(a_{\lceil k/2 \rceil + 1}, \dots, a_{k})} = \begin{cases}
            1, &(a_{\lceil k/2 \rceil + 1},b) \in R_{\lceil k/2 \rceil + 1}, \dots, (a_{k}, b) \in R_{k} \\
            0, &\text{otherwise}
        \end{cases}
    \end{equation*}

    Observe that the non-zero entries in the product of matrices $M$ and $N$ correspond to the output of the subquery $Q''$. In other words, a non-zero entry for row $(a_1, a_2, \dots, a_{\lceil k/2 \rceil})$ and column $(a_{\lceil k/2 \rceil + 1}, \dots, a_{k})$ implies that there exists a value $b$, such that $(a_i, b) \in R_i$ for each input relation and thus $(a_1, a_2, \dots, a_k) \in Q''$. By construction, it is also guaranteed that every output tuple in the result of $Q''$ will have a non-zero entry in the product of $M$ and $N$. Using Lemma~\ref{lem:matrix:multiplication}, we can construct and multiply the matrices in time $O((|D|/\delta)^{\lceil k/2 \rceil + 1 + \lfloor k/2 \rfloor} \cdot (|D|/\delta)^{\omega-3}) = O((|D|/\delta)^{\omega+k-2})$ since the smallest dimension $\beta = (|D|/\delta)$ as $k \geq 2$. The output of $Q'$ and $Q''$ can be deduplicated using a hash set.

    We are ready to enumerate the query result. The key observation is that the materialized and deduplicated output of $Q'$ and $Q''$ can be enumerated with constant delay, each of $k$ subqueries where $x_i = \ell$ can be enumerated with delay $O(\delta)$, and there is no overlap in the output between any of the queries during the enumeration phase. It remains to argue the total preprocessing time required for processing $Q'$ and $Q''$. Processing the two queries requires $T_p = O(|D| \cdot \delta^{k-1} + (|D|/\delta)^{\omega+k-2})$. By ensuring that $|D| \cdot \delta^{k-1} \leq (|D|/\delta)^{\omega+k-2}$, we get the desired range of $ 1 \leq \delta \leq |D|^{(\omega + k - 3)/(\omega + 2 \cdot k - 3)}$.}  
    \eat{ 
	\begin{itemize}
		\item $x_i$ has $? = \ell${, $y$ has $?=\star$} and $z$ has $? = \star$. In this case, we can invoke {\sc ListMerge}$(v)$ for each valuation $v$ of attribute $x_i$ and enumerate the output. Note that 
		\item $x$ has $? = h${, $y$ has $?=\star$} and $z$ has $? = \ell$. In this case, we can invoke {\sc ListMerge}$(v_i)$ for each valuation $v_i$ of attribute $z$ and enumerate the output. Note that there is no output overlap between this case and the previous one.
		\item both $x,z$ have $? = h$. We compute the output of $\pi_{x,z} R(x^h, y^?) \Join S(y^?, z^h)$ in preprocessing phase and obtain $O(1)$-delay enumeration. {In the following, we say that $y$ has $? = \ell$ to mean that the join considers all $y$ valuations that have degree at most $\delta$ in both $R$ and $S$.}
		\begin{itemize}
			\item $y$ has $? = \ell$. We compute the full join $R(x^h, y^\ell) \Join S(y^\ell, z^h)$ and materialize all distinct output results, which takes $O(|D| \cdot \delta)$ time. 
			\item $y$ has $? = h$. All attributes have at most $|D|/\delta$ valuations. We now have a square matrix multiplication instance where all dimensions have size $O(|D|/\delta)$. Using Lemma~\ref{lem:matrix:multiplication}, we can evaluate the join in time $O((|D|/\delta)^{\omega})$. 
		\end{itemize}
	\end{itemize}
	
	Overall, the preprocessing time is $T_p = O((|D|/\delta)^\omega + |D| \cdot \delta)$. The matrix multiplication term dominates whenever $\delta \leq O(|D|^{(\omega-1)/(\omega+1)})$ which gives us the desired time-delay tradeoff.}
\end{proof}
  For the two-path query and the current best value of $\omega = 2.3715$~\cite{williams2024new}, we get the tradeoff as $T_p = O((|D|/\delta)^{2.3715})$ and a delay guarantee of $O(\delta)$ for $|D|^{0.15} < \delta \leq |D|^{0.406}$. If we choose $\delta = |D|^{0.40}$, the worst-case preprocessing time is $T_p = O(|D|^{1.4229})$. In contrast, Theorem~\ref{thm:basic:three} requires a worst-case preprocessing time of $T_p = O(|D|^{1.6})$, which is suboptimal compared to the above theorem. On the other hand, since $T_p = O(|D|^{1.4229})$, \changes{we can safely assume that $|\tOUT_{\Join}| > |D|^{1.4229}$, otherwise one can simply compute the full join in time $c^\star \cdot |D|^{1.4229}$ using {\sc ListMerge}, deduplicate and get constant delay enumeration.} Applying Theorem~\ref{thm:star:delay} with $|\tOUT_{\Join}| > |D|^{1.4229}$ tells us that we can obtain delay as $O(|D|^2/ |\tOUT_{\Join}|) = O(|D|^{0.5771})$. Thus, we can offer the user both choices, and the user can decide which algorithm to use.	
\section{Left-Deep Hierarchical Queries}
\label{sec:ldeep}

In this section, we will apply our techniques to another subset of hierarchical queries, which we call {\em left-deep}. 
A left-deep hierarchical query is of the following form:
\[
Q_{\texttt{leftdeep}}^k = \pi_{w_1,w_2,\cdots,w_k}R_1(w_1, x_1) {\Join} R_2(w_2, x_1, x_2) {\Join} \dots {\Join} R_{k}(w_{k}, x_1, \dots, x_{k})
\]

\begin{algorithm}[t]
	\SetCommentSty{textsf}
	\DontPrintSemicolon 
	\SetKwFunction{preprocess}{\textsc{Preprocess}}
        \SetKwFunction{enumerate}{\textsc{Enumerate}}
        \SetKwFunction{enumeratel}{\textsc{EnumerateFixed$w_k$}}
	\SetKwInOut{Inp}{\textsc{input}}
	\SetKwProg{myproc}{\textsc{procedure}}{}{}
	\BlankLine
        \revb{
        \myproc{\preprocess}{
        Compute $|\tOUT_{\Join}|$ \;
        Fix $\delta = 2 \cdot |D|^k / |\tOUT_{\Join}|$ \;
        Apply the domain compression transformation using $f$ and $f^{-1}$ on each relation as described in~\autoref{sec:warmup} \;
        For relation $R_k$, create two additional indexes: (i) using a hash table $H^\ell_k$ with key as all light valuations $v$ for variable $w_i$ and value as the list $\pi_{x_1, \dots, x_i}\sigma_{w_i = v} R_i$; (ii) using a hash table $H^h_k$ with key as all heavy valuations $v$ for variable $w_i$ and value as the list $\pi_{x_1, \dots, x_i}\sigma_{w_i = v} R_i$ \;
        Create an index (called $H_i(u)$) for each input relation $R_i, i \in \{1, \dots, k-1\}$ using a hash table with key as valuation $u$ for variables $x_1, \dots, x_i$ and value as the sorted list $\pi_{w_i}\sigma_{x_1 = u[x_1], \dots, x_i = u[x_i]} R_i$
        }
        \myproc{\enumeratel{$v$}}{
            Consider each $u \in \pi_{x_1, \dots, x_k}\sigma_{w_k = v} R_k$. Let $\mA^v_u$ be the algorithm that enumerates $\times^{k-1}_{j=1} (H_j(\pi_{x_1, \dots, x_j} u))$ in lexicographic order with constant delay \;
            Invoke \autoref{algo:merging} with input as $\mA^v_u, u \in \pi_{x_1, \dots, x_k}\sigma_{w_k = v} R_k$. Capture the output $w$ emitted on~\autoref{line:output} of \autoref{algo:merging}.\;
            \textbf{output} $f^{-1}(v, w)$
        }
        \myproc{\enumerate}{
            Let $\mA'$ be the algorithm that calls \enumeratel{$v$} for all light $w_k$ valuations, requiring total time $T' = |\tOUT_{\Join}|/2$ \;
            Let $\mA$ be the algorithm that calls \enumeratel{$v$} for all heavy $w_k$ valuations, requiring total time $T = c^\star \cdot |\tOUT_{\Join}|/2$ \tcc*{$c^\star$ is the upper bound on the number operations required by one invocation of \enumeratel{$v$}}
            Apply Lemma~\ref{lem:helper:one} with $\mA', \mA, T, T'$ as input
        }}
	\caption{{\sc LeftDeepHierarchical}$(Q_{\texttt{leftdeep}}^k, D)$} \label{algo:leftdeep}
\end{algorithm}

\noindent $Q_{\texttt{leftdeep}}^k$ is a hierarchical query for any $k \geq 1$. For $k=2$, we get the two-path query. For $k=3$, we get $\pi_{x_1,x_2,x_3}R_1(w_1, x_1) \Join R_2(w_2, x_1, x_2) \Join R_3(w_3, x_1, x_2, x_3)$ \revb{(hypergraph shown in Figure~\ref{fig:hyp})}.

\begin{figure}[!htp]
\begin{tikzpicture}
[
    he/.style={draw, rounded corners},        
    ce/.style={draw, dashed, rounded corners} 
]

\node (e) at (2,0) {$x_3$};
\node (f) at (2,-1) {$w_3$};
\node (d) at (2,1) {$x_2$};
\node (a) at (2,2) {$x_1$};
\node (b) at (3,2) {$w_1$};
\node (c) at (1,1) {$w_2$};

\node [he, fit=(a) (b)] {};
\node [label=$R_1$] at (4,1.625) {};
\node [he, fit=(a) (d) (e) (f)] {};
\node [label=$R_2$] at (0,1) {};
\node [he, fit=(a) (d) (c)] {};
\node [label=$R_3$] at (3,-1.5) {};

\end{tikzpicture}

    \caption{Hypergraph for query $Q_{\texttt{leftdeep}}^3$.}
    \label{fig:hyp}
\end{figure}

\revb{\autoref{algo:leftdeep} shows the preprocessing and enumeration phases for left-deep hierarchical queries. Let us first look at the preprocessing step. Similar to star queries, the algorithm begins by computing the size of the full join and performing the domain transformation to make sure all values in the relation as between $1$ and $|D|$. Then, we create a bidirectional index for relation $R_k$. For each valuation $v$ of $w_k$, we store the list $\pi_{x_1, \dots x_k}\sigma_{w_k = v} R_k$. The light valuations have the list stored in the hash table $H^\ell_k$ and the heavy valuations have the lists stored in $H^h_k$. For the remaining relations, we store a sorted list of $w_i$ for each tuple $\pi_{x_1, \dots, x_i} R_i$ in the hash table $H_i$.

During the enumeration phase, like star queries, we will interleave the algorithm's execution that processes the light $w_k$ valuations with the heavy $w_k$ valuations. In particular, procedure \textsc{EnumerateFixed$w_k$($v$)} shows how to generate the output for a fixed $w_k$ valuation $v$. In particular, for any fixing $u$ of $\pi_{x_1, \dots x_k}\sigma_{w_k = v} R_k$, the cartesian product of the $H_j(\pi_{x_1, \dots x_k}) u)$ can be enumerated with constant delay in lexicographic order. Therefore, now we can use the \textsc{Merge} subroutine directly (Algorithmn~\ref{algo:merging}) to combine the output for all $u \in \pi_{x_1, \dots x_k}\sigma_{w_k = v} R_k$. Since only the light $w_k$ valuations will give us a small delay guarantee, the interleaving with the heavy valuations will help us ensure we can get a possibly sublinear overall guarantee.} We show that the following result holds:

\begin{thm}
	For a left-deep query $Q_{\texttt{\upshape leftdeep}}^k$ and any instance $D$, there exists an algorithm with preprocessing time $T_p = O(|D|)$ and preprocessing space $S_p = O(|D|)$ such that the query results $Q_{\texttt{leftdeep}}^k(D)$ can be enumerated with delay $O(|D|^{k}/|\tOUT_{\Join}|)$ and enumeration space $S_e = O(|D|)$. \label{thm:main:leftdeep}
\end{thm}
\begin{proof}
\revb{
    Let the degree threshold be $\delta = 2 \cdot |D|^k / |\tOUT_{\Join}|$. First, observe that each procedure step \textsc{Preprocess} requires only $O(|D|)$ time. Indeed, computing $|\tOUT_{\Join}|$, domain compression, and index creation all require a linear pass on each relation $R_i$. Next, we argue that procedure \textsc{EnumeratedFixed$w_k$($v$)} can enumerate output with delay $O(|\pi_{x_1, \dots, x_k}\sigma_{w_k = v} R_k|)$. First, note that for a fixed $u \in \pi_{x_1, \dots, x_k}\sigma_{w_k = v} R_k$, $\mA^v_u$ enumerates the cartesian product with constant delay. Thus, since we invoke $|\pi_{x_1, \dots, x_k}\sigma_{w_k = v} R_k|$ instances of Algorithm~\ref{algo:merging}, one for each value of $u$, each of which enumerates sorted results in constant delay, the overall delay is $O(|\pi_{x_1, \dots, x_k}\sigma_{w_k = v} R_k|)$ according to Lemma~\ref{lem:helper:three}. 

    First, we analyze the total time $T$ required by $\mA$ that processes all heavy valuations. As the degree threshold is $\delta$, there are at most $|D|/\delta$ heavy $w_k$ valuations, and each of them may generate $|D|^{k-1}$ output for the $k-1$ remaining projection variables. Therefore, $T \leq c^\star \cdot |D|^{k-1} \cdot |D| /\delta = c^\star \cdot |D|^k/\delta = c^\star \cdot |\tOUT_{\Join}|/2$, where $c^\star$ is the upper bound on the number of operations performed by \textsc{EnumeratedFixed$w_k$($v$)} for generating one output tuple. Further, if $Q_H$ and $Q_L$ denote the subqueries without projections evaluated by $\mA$ and $\mA'$, it holds that $|Q_H| + |Q_L| = |\tOUT_{\Join}|$ since both algorithms are doing list merging and go over every tuple in the full join once. Therefore, it holds that $|Q_L| = |\tOUT_{\Join}| - |Q_H| = |\tOUT_{\Join}| - |D|^k/\delta \geq |\tOUT_{\Join}|/2$. Therefore, a lower bound on $T'$ (i.e., time required to process all light valuations) is $|\tOUT_{\Join}|/2$.

	We can now apply Lemma~\ref{lem:helper:one} with the following parameters
\begin{itemize}
     \item $\mA'$ with $T' = |\tOUT_{\Join}|/2$ and delay guarantee of $\mA'$ as $O(\delta)$
     \item $\mA$ with $T = c^\star \cdot |\tOUT_{\Join}|/2$
     \item $T,T'$ are fixed once $|\tOUT_{\Join}|$ and constants $c^\star$ has been determined by analyzing the program.
\end{itemize}

\noindent Lemma~\ref{lem:helper:one} implies the delay $O(\delta \cdot \max\{1, T/T'\}) = O(\delta)$, achieving the desired guarantee.}
\end{proof}

In the above theorem, $\tOUT_{\Join}$ is the full join size of $Q_{\texttt{leftdeep}}^k$. The AGM exponent for $Q_{\texttt{leftdeep}}^k$ is $k$.
Observe that Theorem~\ref{thm:main:leftdeep} is of interest when $|\tOUT_{\Join}| > |D|^{k-1}$ to ensure that the delay is smaller than $O(|D|)$. When the condition $|\tOUT_{\Join}| > |D|^{k-1}$ holds, the delay obtained by Theorem~\ref{thm:main:leftdeep} is also better than the one given by the tradeoff in Theorem~\ref{thm:basic:three}. In the worst-case when $|\tOUT_{\Join}| = \Theta(|D|^k)$, we can achieve constant delay enumeration after linear preprocessing time, compared to Theorem~\ref{thm:basic:three} that would require $\Theta(|D|^k)$ preprocessing time to achieve the same delay. \changes{The decision of when to apply Theorem~\ref{thm:main:leftdeep} or Theorem~\ref{thm:basic:three} can be made in linear time by checking whether $|D|^k/|\tOUT_{\Join}|$ is smaller or larger than the actual delay guarantee obtained by the algorithm of Theorem~\ref{thm:basic:three} after linear time preprocessing.}
	\section{Path Queries}
\label{sec:path}

In this section, we will study path queries $P_k$. 
Recall that for $k \geq 3$, $P_k$ is not a hierarchical query; hence, the tradeoff from~\cite{kara19} does not apply. {A subset of path queries, namely 3-path and 4-path counting queries, were considered in~\cite{kara2019counting}. The algorithm used for counting the answers of 3-path and 4-path queries under updates constructed a set of views that can be used to enumerate the query results under the static setting. Our result extends the same idea to apply to arbitrary length path queries, which we state next.}

\begin{thm} \label{thm:path}
For a path query $P_k$, any instance $D$ and parameter $\epsilon \in [0,1)$, there exists an algorithm with preprocessing time $T_p = O(|D|^{2 - \epsilon/(k-1)})$ and preprocessing space $S_p = O(|D|^{2 - \epsilon/(k-1)})$, such that the query result $P_k(D)$ can be enumerated with delay $O(|D|^\epsilon)$ and enumeration space $S_e = O(|D|)$.
\end{thm}
\begin{proof}
	Let $\Delta$ be a parameter that we will fix later. In the preprocessing phase, we first perform a full reducer pass to remove dangling tuples, {apply the domain transformation technique by creating $f$ and $f^{-1}$} and then create a hash map for each relation $R_i(x_i, x_{i+1})$ with key $x_i$, and all its corresponding $x_{i+1}$ values sorted for each key entry. (We also store the degree of each value.) Next, for every $i=1, \dots, k$, and every heavy value $a$ of $x_{i}$ in $R_i$ (with degree $> \Delta$), we compute the query $\pi_{x_{k+1}}(R_i(a, x_{i+1}) \Join \dots \Join R_k(x_k, x_{k+1}))$, and store its result sorted in a hash map with key $a$. Note that each such query can be computed in time $O(|D|)$ through a sequence of semijoins and projections, { and sorting in linear time using count sort}. Since there are at most $|D|/\Delta$ heavy values for each $x_i$, the total running time (and space necessary) of this step is $O(|D|^2/\Delta)$.
	
	We will present the enumeration algorithm using induction. In particular, for each $i=k, \dots, 1$ and for every value $a$ of $x_i$, the subquery $\pi_{x_{k+1}}(R_i(a, x_{i+1}) \Join \dots \Join R_k(x_k, x_{k+1}))$ can be enumerated (using the same order) with delay $O(\Delta^{k-i} )$. This implies that our target path query can be enumerated with delay $O(\Delta^{k-1})$ by simply iterating through all values of $x_1$ in $R_1$. Finally, we can obtain the desired result by choosing $\Delta = |D|^{\epsilon/(k-1)}$. Indeed, for the base case ($i=k$) it is trivial to see that we can enumerate $\pi_{x_{k+1}}(R_k(a, x_{k+1}))$ with constant delay $O(1)$ using the stored hash map. Consider some $i$ and a value $a$ for $x_i$ in $R_i$ for the inductive step. If the value $a$ is heavy, we can enumerate all the $x_{k+1}$'s with constant delay by probing the hash map we computed during the preprocessing phase. If the value is light, then there are at most $\Delta$ values of $x_{i+1}$. For each such value $b$, the inductive step provides an algorithm that enumerates all $x_{k+1}$ with delay $O(\Delta^{k-i-1})$. Observe that the order across all $b$'s will be the same. Thus, we can apply Lemma~\ref{lem:helper:three} to obtain that we can enumerate the union of the results with delay $O(\Delta \cdot \Delta^{k-i-1}) = O(\Delta^{k-i})$. Finally, For each output tuple $t$ generated, we return $f^{-1}(t)$ to the user.
\end{proof}

When $\epsilon =1$, we can obtain a delay $O(|D|)$ using only linear preprocessing time $O(|D|)${ using the result of~\cite{bagan2007acyclic} since the query is acyclic}, while when $\epsilon$ approaches to $1$,  the above theorem would give preprocessing time $O(|D|^{2-1/(k-1)})$. Hence, for $k \geq 3$, we observe a discontinuity in the time-delay tradeoff. A second observation following from Theorem~\ref{thm:path} is that as $k \rightarrow \infty$, the tradeoff collapses to only two extremal points: one where we get {constant} delay with $T_p = O(|D|^2)$, and the other where we get linear delay with $T_p = O(|D|)$.

	\section{Related Work} \label{sec:related}

We overview prior work on static query evaluation for acyclic join-project queries. The result of any acyclic CQ can be enumerated with constant delay after linear time
preprocessing if and only if it is free-connex~\cite{bagan2007acyclic}. This is based on the conjecture that Boolean multiplication of $n \times n$ matrices cannot be done in $O(n^2)$ time. Acyclicity itself is necessary for having constant delay enumeration: A CQ admits constant delay enumeration after linear-time preprocessing if and only if it is free-connex acyclic~\cite{brault2013pertinence}. This is based on a stronger hypothesis that the existence of a triangle in a hypergraph of $n$ vertices cannot be tested in
time $O(n^2)$ and that for any $k$, testing the presence of a $k$-dimensional tetrahedron cannot be tested in linear time. We refer readers to an overview of problems and progress related to constant delay enumeration in~\cite{segoufin2015constant}.  Prior work also exhibits a dependency between the space and enumeration delay
for conjunctive queries with access patterns~\cite{deep2018compressed}. It constructs a succinct representation of the query result that allows for the enumeration of tuples over some variables under value bindings for all other variables. As noted by~\cite{kara19}, it does not support enumeration for queries with free variables, which is also its main contribution. Our work demonstrates that for a subset of hierarchical queries, the tradeoff shown in~\cite{kara19} is not optimal. Our work introduces fundamentally new ideas that may be useful in improving the tradeoff for arbitrary hierarchical queries and enumeration of UCQs. There has also been some experimental work by the database community on problems related to enumerating join-project query results efficiently but without any formal delay guarantees. Seminal work~\cite{graphgen2015, graphgen2017, graphgen2017adaptive, deep2019ranked} has studied how compressed representations can be created apriori that allow for faster enumeration of query results. {For the two path query, the fastest evaluation algorithm (with no delay guarantees) evaluates the projection join output in time $O(|D| \cdot |\tOUT_{\pi}|^{\frac{(\omega-1)}{(\omega+1)}}+ |D|^{\frac{2(\omega-1)}{(\omega+1)}} \cdot |\tOUT_{\pi}|^{\frac{2}{(\omega+1)}})$~\cite{deep2020fast,amossen2009faster}. For star queries, there is no closed form expression but  fast matrix multiplication can be used to obtain instance dependent bounds on running time.} Also related is the problem of dynamic evaluation of hierarchical queries. Recent work~\cite{kara2019counting, kara19, berkholz2017answering,berkholz2018answering} has studied the tradeoff between amortized update time and delay guarantees. Some of our techniques may also lead to new insights and improvements in existing algorithms. Prior work in differential privacy~\cite{roy2020crypt} and directed graphical models~\cite{chowdhury2020data} may also benefit from some of our techniques.	
	\section{Conclusion and Open Problems} \label{sec:conclusion}
In this paper, we studied the problem of enumerating query results for an important subset of CQs with projections, namely star and path queries. We presented data-dependent algorithms that improve upon existing results by achieving non-trivial delay guarantees in linear preprocessing time. Our results are based on interleaving join query computation to achieve meaningful delay guarantees. Further, we showed how non-combinatorial algorithms (fast matrix multiplication) can be used for faster preprocessing to improve the tradeoff between preprocessing time and delay. We also presented new results on time-delay tradeoffs for a subset of non-hierarchical queries for the class of path queries.  Our results also open several new tantalizing questions that open up possible directions for future work.

\begin{itemize}
    \item {\em More preprocessing time for star queries.} One major open question is whether Theorem~\ref{thm:star:delay} can benefit from more preprocessing time to achieve lower delay guarantees. For instance, if we can afford the algorithm preprocessing time $T_p = O(|\tOUT_{\Join}|/|D|^{\epsilon} + |D|)$ time, can we expect to get delay $\delta = O(|D|^\epsilon)$ for all $\epsilon \in (0,1)$?
    \item {\em Sublinear delay guarantees for the two-path query.} It is not known whether we can achieve a sublinear delay guarantee after linear preprocessing time for $Q_{\texttt{two-path}}$. This question is equivalent to the following problem: for what values of $|\tOUT_{\pi}|$ can $Q_{\texttt{path}}$ be evaluated in linear time? If $|\tOUT_{\pi}| = |D|^\epsilon$, the best known algorithms can evaluate $Q_{\texttt{two-path}}$ in time $O(|D|^{1+\epsilon/3})$ (using fast matrix multiplication)~\cite{deep2020fast} but this is still superlinear.
    \item {\em Space-delay bounds.} The last question is to study the tradeoff between space vs delay for arbitrary hierarchical and path queries. Using some of our techniques, it may be possible to smartly materialize a certain subset of joins that could be used to achieve delay guarantees by interleaving with join computation. We also believe the space-delay tradeoff implied by prior work can be improved for specific delay ranges using the ideas introduced in this paper.
\end{itemize}

	\newpage
	\bibliographystyle{alphaurl}
	\bibliography{reference}
	
\end{document}